\documentclass[runningheads]{llncs}
\usepackage[T1]{fontenc}
\usepackage{graphicx}

\usepackage[numbers]{natbib}
\usepackage{amsmath}
\usepackage{amssymb}
\usepackage{booktabs}
\usepackage{tikz}
\usepackage{xcolor}
\usepackage{complexity}
\usepackage{environ}
\usepackage{algorithm}
\usepackage{algpseudocode}
\usepackage{tabularx}
\usepackage{complexity}
\usepackage{enumitem}
\usepackage{thm-restate}
\usepackage{todonotes}
\usepackage{hyperref}
\usepackage{cleveref}

\definecolor{CoalitionColor}{HTML}{1D42A6}

\usetikzlibrary{calc}
\usetikzlibrary{decorations.pathreplacing}
\tikzstyle{agent}=[draw, shape=circle, fill=white, fill opacity=1, text opacity=1, minimum size=30pt]
\tikzstyle{smallagent}=[draw, shape=circle, minimum size=15pt]
\tikzstyle{aux}=[draw, shape=circle, minimum size=5pt]

\newcommand{\NW}{\mathcal{NW}}

\newcommand{\convexpath}[2]{
  [
  create hullcoords/.code={
    \global\edef\namelist{#1}
    \foreach [count=\counter] \nodename in \namelist {
      \global\edef\numberofnodes{\counter}
      \coordinate (hullcoord\counter) at (\nodename);
    }
    \coordinate (hullcoord0) at (hullcoord\numberofnodes);
    \pgfmathtruncatemacro\lastnumber{\numberofnodes+1}
    \coordinate (hullcoord\lastnumber) at (hullcoord1);
  },
  create hullcoords
  ]
  ($(hullcoord1)!#2!-90:(hullcoord0)$)
  \foreach [
  evaluate=\currentnode as \previousnode using \currentnode-1,
  evaluate=\currentnode as \nextnode using \currentnode+1
  ] \currentnode in {1,...,\numberofnodes} {
    let \p1 = ($(hullcoord\currentnode) - (hullcoord\previousnode)$),
    \n1 = {atan2(\y1,\x1) + 90},
    \p2 = ($(hullcoord\nextnode) - (hullcoord\currentnode)$),
    \n2 = {atan2(\y2,\x2) + 90},
    \n{delta} = {Mod(\n2-\n1,360) - 360}
    in 
    {arc [start angle=\n1, delta angle=\n{delta}, radius=#2]}
    -- ($(hullcoord\nextnode)!#2!-90:(hullcoord\currentnode)$) 
  }
}

\makeatletter
\newcommand{\problemtitle}[1]{\gdef\@problemtitle{#1}}\newcommand{\probleminput}[1]{\gdef\@probleminput{#1}}\newcommand{\problemsolution}[1]{\gdef\@problemsolution{#1}}\NewEnviron{problemdef}{
	\problemtitle{}\probleminput{}\problemsolution{}\BODY \par\addvspace{.5\baselineskip}
	\noindent
	\begin{tabularx}{\textwidth}{@{\hspace{\parindent}} l X c}
		\toprule
		\multicolumn{2}{@{\hspace{\parindent}}l}{\@problemtitle} \\\textbf{Input:} & \@probleminput \\\textbf{Answer:} & \@problemsolution \\\bottomrule
	\end{tabularx}
	\par\addvspace{.5\baselineskip}
}
\makeatother

\usepackage{color}

\begin{document}
\title{Nash~Welfare in Additively~Separable~Hedonic~Games}
\author{Marta Pagano\inst{1,2} \and Alexander Schlenga\inst{2}}
\authorrunning{M. Pagano and A. Schlenga}
\institute{Sapienza University of Rome, Italy \and Technical University of Munich, Germany\\
\email{alexander.schlenga@tum.de}
}

\maketitle

\begin{abstract}
Additively separable hedonic games (ASHGs) are a prominent model of coalition formation where agents' preferences are derived from their individual valuations of peers. While social welfare maximization in ASHGs has traditionally focused mostly on utilitarian welfare, Nash welfare---a well-established metric in economics which balances fairness with efficiency and offers scale invariance---has been entirely overlooked. In this paper, we initiate the study of Nash welfare in ASHGs. We point out desirable properties fulfilled by partitions with high Nash welfare. This includes
guaranteed contractual Nash stability in symmetric games,
even for any approximation of Nash welfare.
This is particularly appealing since, as for other welfare notions, Nash welfare turns out to be \NP-hard to maximize, even for the ASHG subclass of symmetric aversion to enemies games (AEGs). A main focus of our study is on approximation algorithms for the Nash welfare objective. We present packing-based algorithms with approximation ratios
for well-established subclasses of ASHGs: $n-1$ for AEGs and $2n$ for appreciation of friends games. This is complemented by a strict inapproximability result showing it is \NP-hard to approximate Nash welfare within a factor of $1.0000759$ in general ASHGs. Further, we investigate the restricted settings with an upper bound on the coalition size or number of coalitions, and draw the boundary between the cases admitting efficient algorithms and those yielding \NP-hardness: bounding the allowed size or number of coalitions by $2$ admits polynomial-time solvability, whereas bounds of $3$ or more yield \NP-hardness or unbounded inapproximability.

\keywords{Coalition Formation \and Hedonic Games \and Nash Social Welfare \and Approximation Algorithms}
\end{abstract}

\section{Introduction}

The formation of coalitions is a well-studied research topic and has been receiving attention for a long time.
It has already been present in the pioneering work on game theory by \citet{vNM53a}.
In the past decades, a plethora of models and solution concepts have been studied to tailor coalition formation models for real word applications \citep{Ray07a,AzSa15a,BER24a}.

Most models focus on settings in which everyone can only be a member of exactly one coalition.
One of the most prominent such models are hedonic games \citep{DrGr80a}.
They give agents the possibility to express unrestricted preferences over the members of their coalition, while making the natural assumption that the preferences are independent of the arrangements of other agents outside their coalition.

Different approaches to what makes a good solution concept for hedonic games have been proposed and investigated, depending on the desired tradeoff between efficiency, fairness, strategyproofness, and algorithmic tractability.
Many of them are adapted from related problems in (algorithmic) economics.
It is commonly agreed that partitions into coalitions should satisfy desirable properties, such as various notions of stability or welfare optimality. Unfortunately, when we are interested in whether partitions with such desirable properties exist, the ability to express arbitrary preferences can cause severe computational obstacles, both with regards to (space-)efficiently stating preferences as well as computing solution concepts.

In order to create a more practical model, additively separable hedonic games (ASHGs) were proposed as a way of concisely inferring an agent's preferences over coalitions from her valuations for other single agents \citep{BoJa02a}. This model has since been subject to a large body of research, in economics as well as distributed artificial intelligence and other areas of computer science \citep{BER24a}.
A prominent desideratum is to maximize social welfare \citep{BoJa02a,ABS11c,FKV22a,BCS25a,FMM+21a,BuRo25b,CoAg25a}.
Such global goals are especially suitable when they can be implemented by a central coordinating instance, but they can even lead to stability guarantees in a distributed view based on incentives of single agents \citep{BoJa02a}.

Most of this work focuses on utilitarian welfare, with egalitarian \citep{ABS11a,CoAg25a} and elitist \citep{ABS11a} welfare receiving some attention.
Somewhat surprisingly, Nash welfare has not been considered so far in ASHGs. Even beyond ASHGs, in the realm of hedonic games, only an isolated result concerning computational hardness in the related model of fractional hedonic games (FHGs) is known \citep{AGG+15b}.
Our paper lays the foundation for closing this gap in the literature.
The motivation behind this is threefold.
\begin{enumerate}
    \item Nash welfare is a well-established social welfare function in economics \citep{Nash50b,Hars59a,KaNa79a}, often praised to achieve just the right balance between efficiency\footnote{Here, by ``efficiency'' we refer to the sum of agents' utilities, i.e., utilitarian welfare.} and fairness (egalitarian and\slash{}or envy-based) \citep{CKM+19a}, and in some models even exhibiting good properties beyond that, like strategyproofness \citep{BBP+19a}.
    So far, fairness and, even more so, the balance between fairness and efficiency have received little attention in ASHGs.
The Nash welfare seeks to avoid singletons (as otherwise it becomes $0$) which feels natural in a coalition formation setting.
    \item Nash welfare is invariant under scaling agents' utilities.
    This makes it a very appealing solution concept for ASHGs without bounded valuation functions.
    Agents cannot compete for priority in consideration by ``naming the bigger number'' when asked for their utilities, a weakness which utilitarian welfare does suffer from.
Moreover, for agents, being in a singleton coalition serves as a natural baseline for utility---agents can influence their attributed utility in proper coalitions by adapting their valuations towards the others but they cannot change the fact that by being alone they receive $0$ utility.
    Here, the all-singleton partition corresponds to what \citet{KaNa79a} call the ``origin''---a state of society from which everyone should be able to improve.
    As a result, for Nash welfare, we can consider agents' utility functions to be normalized.
    \item Nash welfare has received quite a lot of attention for being a very good solution concept in models of fair division \citep{DaSc15a,CKM+19a,ABF+21a,GaMu23a,Suks23a} and matching \citep{JaVa24a,GSVY25a} which can be seen as restricted subclasses of ASHGs.
    Hence, studying it in this overarching general model connects related fields and broadens our understanding of Nash welfare not only for ASHGs.
\end{enumerate}

\subsection{Related Work}

Hedonic games have been conceptualized and first studied by \citet{DrGr80a}.
Later, they were popularized by \citet{BoJa02a} and \citet{BKS01a}, who also introduced ASHGs.
They studied the core and other stability notions.
In particular, \citet{BoJa02a} showed that in symmetric ASHGs, a partition with optimal utilitarian welfare is always Nash stable (a different concept than Nash welfare).
\Citet{AMT+95a} have independently proposed the essentials of ASHGs from a political science perspective.
For an overview of the literature on hedonic games, see the book chapters by \citet{AzSa15a} or \citet{BER24a}.

Subsequently, \citet{CeHa02a} and \citet{Ball04a} initiated the study of computational complexity in hedonic games, showing the first boundaries between efficient algorithms and \NP-hardness.
\Citet{SuDi10a} were the first to investigate computational aspects of ASHGs.
This line of research was continued by \citet{ABS11c} who clarified the computational aspects for a wide range of solution concepts in ASHGs.
Among other things, they showed that computing partitions with optimal utilitarian or egalitarian welfare is \NP-hard.
Only maximizing elitist welfare turned out to be achievable in polynomial time.
\Citet{PeEl15a} and \citet{ZeBu26a} introduced general frameworks for proving computational hardness results for stability notions in a wide range of different hedonic games models, ASHGs among them.

Other succinctly representable classes of hedonic games were proposed, among them aversion to enemies games (AEGs) and appreciation of friends games (AFGs) by \citet{DBHS06a}, which are subclasses of ASHGs and natural models we focus on in our paper.
\Citet{ABB+17a} and \citet{Olse12a} defined FHGs and modified fractional hedonic games (MFHGs), respectively, which are akin to ASHGs but utilities are derived by taking an average (instead of the sum) over single-agent valuations.
\Citet{AGG+15b} studied complexity and approximation of different welfare notions in FHGs, Nash welfare\footnote{They use the product of the utilities but this is equivalent up to a monotone transformation.} among them.
They proved all considered welfare functions to be \NP-hard to maximize and provided approximation algorithms for utilitarian and egalitarian welfare.
\Citet{EFF20a} and \citet{Bull19a} studied Pareto-optimality, a natural ordinal relaxation of welfare maximality, and its computational aspects in ASHGS, FHGs, and MFHGs.
Also here, many problems turned out the be \NP-hard in general, and restricted classes were to be considered for positive results.

\Citet{FKV22a} investigated strategyproofness in algorithms maximizing utilitarian welfare in AEGs and AFGs.
Moreover, they showed \NP-hardness of approximating utilitarian welfare for AEGs within $\mathcal{O}(n^{1-\varepsilon})$ for any constant $\varepsilon>0$.
This has been picked up by \citet{BCS25a}, who studied in detail the approximability of utilitarian welfare in ASHGs.
Both works showed that randomization helps, and partitions with good utilitarian welfare approximations in expectation are possible.
Recently, \citet{FMM+21a}, \citet{BuRo25b}, \citet{CoAg25a}, and \citet{BRS26a} started a line of research on online algorithms for welfare maximization in hedonic games and studied the achievable competitive ratios for ASHGs and FHGs.

Instead of expressing preference for all possible partition, a natural assumption is to restrict the size of coalitions.
Regarding ASHGs, \citet{FGM25a} provide a more nuanced picture of the parameterized complexity of welfare maximality, and \citet{LHSA24a} and \citet{BDEG26a} study stability.
It is also common to consider both a bound on coalition sizes as well as the number of coalitions \citep[see, e.g.,][]{FMM+21a}.

We proceed by discussing the origins of Nash welfare.
It was first introduced in the seminal paper by \citet{Nash50b}, where he carved out the product of two agents' utilities as the unique solution concept satisfying a set of desirable axioms in a bargaining model.
Then, \citet{Hars59a} generalized this concept to situations with $n$ agents.
Later, \citet{KaNa79a} picked it up and defined the Nash (social) welfare function by the product of agents' utilities.\footnote{They use the sum of logarithms of the utilities but this is equivalent up to a monotone transformation.
}

Nash welfare has received a lot of attention in fair division. We conclude this section by listing the corresponding works which are most relevant to ours. Computational aspects have been studied by \citet{DaSc15a}, and further by \citet{CoGz18a} who proved there exists a constant factor approximation, complemented by \citet{Lee17a} who showed \APX-hardness. \Citet{GaMu23a} deepened this study of approximability in different aspects. Inspired by this line of research, \citet{JaVa24a} and \citet{GSVY25a} started the study of Nash welfare computation and approximation in matching with two-sided cardinal utilities (which can be seen as one step towards the more general model of ASHGs). Axiomatic properties of the maximum Nash welfare solution in fair division have been studied by \citet{CKM+19a} who showed that is fair in different aspects, in particular it always constitutes an EF1 allocation, which was later used by \citet{Suks23a} for a characterization, while \citet{ABF+21a} showed that it even satisfies the stronger notion of EFX if there are at most two distinct valuations.

\subsection{Our Contribution}

We are the first to define and study Nash welfare in the setting of ASHGs.
Most of our work concerns its computational aspects.
First, we describe at the end of \Cref{sec:prelims} why we define the Nash welfare as the geometric mean (instead of the product) of agent utilities.
Then, we begin \Cref{sec:aeg} with setting a baseline by showing that it is \NP-hard to compute a partition of optimal Nash welfare, even for the restricted subclass of symmetric AEGs.
We proceed to provide a simple packing-based algorithm for the same setting, achieving an approximation ratio of $n-1$.
Moreover, we give insights into the structure of partitions of high Nash welfare and characterize when the optimal Nash welfare is $0$.
In \Cref{sec:afg}, we move to the related setting of AFGs and answer similar questions.
Here, we present a more sophisticated algorithm for achieving an approximation ratio of $2n$.
We complement our approximation guarantees by showing in \Cref{sec:hardness} that there exists a constant by which it is \NP-hard to approximate Nash welfare in ASHGs.
For results of a different flavor, we investigate in \Cref{sec:bound_number,sec:bound_size} what happens if we enforce an upper bound on the number or size of coalitions.
Here, we observe the typical dichotomy between the sizes of $2$ and $3$ in computational complexity: For bounds of $2$, we describe optimal efficient algorithms, whereas for bounds of at least $3$, we show hardness (for the bound on the number of coalitions even hardness of any approximation).
Lastly, in \Cref{sec:stability}, we investigate the stability guarantees implied by achieving optimal or approximate Nash welfare but also show aspects in which optimal Nash welfare is weaker than utilitarian welfare in that regard.
All proofs are deferred to the appendix.

\section{Preliminaries}\label{sec:prelims}

Let $N$ be a finite set of \emph{agents}.
A nonempty subset $C\subseteq N$ is called a \emph{coalition}.
The set of coalitions containing agent~$i\in N$ is denoted by $\mathcal N_i:=\{C\subseteq N\mid i\in C\}$.
A set $\pi$ of disjoint coalitions containing all members of $N$ is a \emph{partition} of $N$.
For agent $i\in N$ and partition $\pi$, let $\pi(i)$ denote the unique coalition in~$\pi$ that~$i$ belongs to.

A (cardinal) \emph{hedonic game} is a pair $G = (N,u)$ where $N$ is the set of agents and $u = (u_i)_{i\in N}$ is a tuple of \emph{utility functions} $u_i\colon \mathcal N_i \to \mathbb Q$.
Since, in our work, the considered game is always fixed, we omit $G$ as a parameter in our notations.
Agents seek to maximize utility and prefer partitions in which their coalition gives them a higher utility.
Hence, we define the utility of a partition~$\pi$ for agent~$i$ as $u_i(\pi) := u_i(\pi(i))$.
We denote by $n:=|N|$ the number of agents.

Following \citet{BoJa02a}, an \emph{ASHG} is a hedonic game $(N,u)$, where for each agent $i\in N$ there exists a \emph{valuation} function $v_i\colon N\setminus\{i\}\to \mathbb Q$ such that for all $C\in \mathcal N_i$ it holds that $u_i(C) = \sum_{j\in C\setminus \{i\}}v_i(j)$.
Note that this implies that the utility for a singleton coalition is~$0$.
Since the valuation functions contain all information for computing utilities, we can also succinctly represent an ASHG as the pair $(N,v)$, where $v = (v_i)_{i\in N}$ is the tuple of valuation functions.
This is equivalent to providing a complete, directed, weighted graph, where the weights of the directed edges induce the valuation functions.

An ASHG $(N,v)$ is said to be \emph{symmetric} if for every pair of distinct agents $i,j\in N$, it holds that $v_i(j) = v_j(i)$.
We write $v(i,j)$ for the symmetric valuation between~$i$ and~$j$.
A complete undirected weighted graph can represent a symmetric ASHG\@.
Furthermore, a subclass of ASHGs is (valuation) \emph{restricted} if it contains only valuations from a fixed set of allowed values.
We will mainly investigate two particular restricted classes, namely AFGs and AEGs.
They restrict the allowed valuations to $\{-1,n\}$ and $\{-n,1\}$ respectively.

Given a game $G = (N,u)$, a partition $\pi$ is said to be \emph{individually rational (IR)} if each agent likes it at least as much as being alone, i.e., for all $i\in N$ it holds that $u_i(\pi)\ge0$.
\begin{definition}
    The \emph{Nash Welfare} of an IR partition $\pi$ is given by the geometric mean of all agents' utilities: $\NW(\pi)=\sqrt[n]{\prod_{i\in N}u_i(\pi)}$.
\end{definition}
We exclusively consider IR partitions when talking about Nash welfare since it is not well-defined if there are agents with negative utility in present.\footnote{\citet{AGG+15b} ignore this issue. Since they use the product formulation of Nash welfare, it is technically still well-defined but the concept makes little sense when multiplying negative utilities.} In general, in any model, working with Nash welfare only makes sense if agents exclusively exhibit non-negative utilities. Using, e.g., the product does not solve the issue since it is completely unclear what it would mean to incorporate negative utilities into the product (multiplying two negative values would lead to the signs being canceled).
Note that the restriction to IR partitions is no issue for the existence of optimal Nash welfare, since, by the all-singletons partition, there always exists at least one IR partition.\footnote{\Cref{sec:bound_number} is an exception.}
We denote by $\pi^*$ a partition with optimal Nash welfare.

There are different possible ways of how to define the Nash welfare which are equivalent up to monotone transformations.
We decided to use the formulation by geometric mean instead of the product (as used, e.g., by \citet{AGG+15b}).
This is mainly due to the fact that writing down meaningful approximation guarantees becomes problematic when all utilities are just multiplied.
Think of a partition in which all agents receive half the utility they get in the one with maximal Nash welfare.
With our definition, this results in an approximation ratio of $2$ while for the product version it would be $2^n$.
Hence, by taking the $n$-th root, we avoid exponential growth in $n$ and instead make approximation ratios better comparable for a varying number of agents.\footnote{In contrast, for utilitarian welfare, it does not influence the approximation ratio whether one uses the sum (as usual) or average of agent utilities.}
Also, the geometric mean version is standard in the literature on approximation algorithms \citep{Lee17a,CoGz18a,GaMu23a,JaVa24a,GSVY25a}.

We cannot assume w.l.o.g.\ that instances are symmetric.
For utilitarian welfare, this is possible because it is invariant under the following transformation.
For any pair of agents $i$ and $j$, consider the original valuations $v_i(j)$ and $v_j(i)$, and substitute $v'(i,j)=\frac{v_i(j)+v_j(i)}2$ for both of them.
A simple example with four agents shows this does not work for Nash welfare.
Let $N=\{a,b,c,d\}$, and $v_a(b)=v_c(d)=4$, $v_b(a)=v_d(c)=0$, $v(a,c)=v(b,d)=1$, $v(a,d)=v(b,c)=-10$.
Here, $\pi=\{\{a,b\},\{c,d\}\}$ leaves $b$ and $d$ with a utility of $0$, turning the Nash welfare $0$ as well.

An algorithm is said to achieve an $\alpha$-approximation (an approximation ratio\slash{}guarantee of $\alpha$) for (a certain class of) ASHGs, if for every instance (of that class), for the partition $\pi$ formed by the algorithm, it holds that $\alpha\cdot\NW(\pi)\ge\NW(\pi^*)$, where $\alpha\ge1$.

Next, we define different types of single-agent deviations and their corresponding stability notions.
Assume, we are given a hedonic game $G=(N,u)$, an agent $i\in N$, and a partition $\pi$ of $N$.
A deviation is a move of some agent from one coalition to another.
Formally, from partition $\pi$, a different partition $\pi'$ can be reached via a deviation of agent $i$ if for all $j \in N \setminus \{i\}$ it holds that $\pi(j) \setminus \{i\} = \pi'(j) \setminus \{i\}$.
A deviation by $i$ is called
\begin{itemize}
    \item a \emph{Nash deviation} if $u_i(\pi')>u_i(\pi)$,
    \item an \emph{individual deviation} if it is a Nash deviation, and for all $j\in\pi'(i)$ it holds that $u_j(\pi')\ge u_j(\pi)$,
    \item a \emph{contractual Nash deviation} if it is a Nash deviation, and for all $j\in\pi(i)$ it holds that $u_j(\pi')\ge u_j(\pi)$,
    \item a \emph{contractual individual deviation} if it is an individual deviation and a contractual Nash deviation.
\end{itemize}
For each of these types of deviations, let a stability notion be defined by the absence of respective deviations.
This way, we get Nash stability (NS), individual stability (IS), contractual Nash stability (CNS), and contractual individual stability (CIS). Note that Nash stability implies the other three, while both contractual Nash stability and individual stability imply contractual individual stability (see \citet{AzSa15a} for a more extensive overview).
In every hedonic game, there exists a CIS partition. This is not the case for the other stability concepts.
For each of them, there even exists an ASHG without any such stable partition \citep{BoJa02a,SuDi07b}.
However, using a potential function argument, \citet{BoJa02a} show that every symmetric ASHG admits a Nash stable partition.

Finally, we introduce a novel restriction for deviations. From a partition $\pi$, we call a deviation (of any type) by agent $i$ \emph{non-abandoning} if $\lvert\pi(i)\rvert\neq2$, that is, $i$ does not leave behind another agent as a singleton. This leads to the concepts of, e.g., non-abandoning Nash deviations and non-abandoning individual deviations which will be used later.

\section{Aversion to Enemies Games}\label{sec:aeg}

In an AEG, each agent classifies each other as being either a friend or an enemy.
Agents have the goal of being in a coalition with as few enemies as possible, and, among two coalitions containing the same number of enemies, prefer the one with more friends in it.
This is modeled as an ASHG by assigning a valuation of $-n$ to enemies and a valuation of $1$ to friends.
For an agent $i$, we denote by $F_i\subseteq N$ the set of friends, and by $E_i\subseteq N$ the set of enemies of $i$ (where of course $F_i\cap E_i=\emptyset$).

AEGs constitute a subclass of ASHGs which naturally captures several scenarios while imposing tight conditions on partitions to be IR.
All members of an agent's coalition have to be her friends.
As a warm-up, we will formalize this easy but useful observation in graph-theoretic terms.
Given an AEG $G=(N,v)$, we denote by
the \emph{mutual-friendship graph} $G_M=(N,E^+\subseteq E)$ the undirected graph which features an edge $(i,j)$ if $v_i(j)=1$ and $v_j(i)=1$.

\begin{restatable}{lemma}{cliqueslemma}\label{lem:cliques}
In an AEG, a partition $\pi$ is IR if and only if every coalition in $\pi$ is a clique in the mutual-friendship graph, meaning no coalition contains a pair of agents such that at least one of them considers the other an enemy.
\end{restatable}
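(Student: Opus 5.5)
We prove the two directions of the equivalence separately. The argument rests on one counting fact about AEG valuations: an agent has at most $n-1$ friends, each contributing $+1$, while a single enemy contributes $-n$. Hence one enemy in a coalition already outweighs all possible friends together.

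\emph{($\Leftarrow$)} Assume every coalition of $\pi$ is a clique in $G_M$. Take any agent $i$ and any $j\in\pi(i)\setminus\{i\}$. The pair $\{i,j\}$ is an edge of $G_M$, so $v_i(j)=1$. It follows that $u_i(\pi)=|\pi(i)|-1\ge 0$. This holds for every $i$, so $\pi$ is IR.

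\emph{($\Rightarrow$)} We argue by contraposition. Suppose some coalition $C\in\pi$ is not a clique in $G_M$. Then there are distinct $i,j\in C$ with $\{i,j\}\notin E^+$. By the definition of $G_M$, at least one of $v_i(j)$ and $v_j(i)$ is not $1$, and since AEG valuations lie in $\{-n,1\}$, it equals $-n$. Without loss of generality $v_i(j)=-n$. The other members of $C\setminus\{i\}$ number at most $|C|-2\le n-2$, and each contributes at most $1$. Therefore
\[
u_i(\pi)\le -n+(n-2) = -2 < 0,
\]
so $\pi$ is not IR.

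The phrasing ``no coalition contains a pair of agents such that at least one of them considers the other an enemy'' is exactly the negation of a non-edge of $G_M$ inside a coalition. The restatement in the lemma is therefore the same condition.

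No step is a real obstacle. The only point needing care is the bound in the ($\Rightarrow$) direction: the enemy weight $-n$ must strictly exceed the maximal friend contribution, which is at most $n-2$ (or $n-1$ in a cruder bound). Either bound yields a strictly negative utility.
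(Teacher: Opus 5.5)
Your proof is correct and matches the paper's argument. In the ($\Leftarrow$) direction you compute $u_i(\pi)=|\pi(i)|-1\ge 0$, and in the ($\Rightarrow$) direction you bound the utility of an agent with an enemy in her coalition by $-n+(n-2)<0$; the only difference is that you argue by contraposition where the paper argues by contradiction.
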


As a result, we can safely disregard one-way friendships when looking for an IR partition and hence when looking for partitions with high Nash welfare. Effectively, we can w.l.o.g.\ restrict attention to symmetric instances.
We use \Cref{lem:cliques} to show computational hardness of the Nash welfare for this restricted class.

\begin{restatable}{theorem}{hardness}\label{thm:hardness}
    Computing a partition with optimal Nash welfare is \NP-hard for symmetric AEGs.
\end{restatable}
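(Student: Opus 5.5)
By \Cref{lem:cliques}, in a symmetric AEG the IR partitions are exactly the partitions of $N$ into cliques of the friendship graph $G_M$. An agent in a clique of size $k$ has utility $k-1$. Hence for an IR partition $\pi$,
\[
\NW(\pi)^n=\prod_{C\in\pi}(|C|-1)^{|C|},
\]
which is positive iff $\pi$ has no singletons. So the problem is: partition a graph into cliques maximizing $\prod_C (|C|-1)^{|C|}$. I will reduce from \textsc{Exact Cover by 3-Sets} (X3C), or more conveniently from \textsc{Partition into Triangles}, which is \NP-complete even on graphs with no $K_4$. I will use the latter.

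Let $H$ be a $K_4$-free graph on $n=3q$ vertices, and let the symmetric AEG have $G_M=H$. All cliques then have size at most $3$, so every coalition of a singleton-free IR partition is an edge or a triangle. If the partition has $t$ triangles and $e$ edges with $3t+2e=n$, then $\NW(\pi)^n=2^{3t}\cdot 1^{2e}=8^t$, which increases with $t$. If $H$ has a triangle partition, the optimum is $8^{q}$, i.e.\ $\NW=2$. Otherwise every IR partition either contains a singleton ($\NW=0$) or has $t<q$, giving $\NW<2$.

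Thus $H$ has a partition into triangles iff the optimal Nash welfare equals $2$. Given an optimal partition we can evaluate its welfare in polynomial time and decide the question, so computing an optimal partition is \NP-hard. The reduction is clearly polynomial, since the valuations are just $1$ on edges of $H$ and $-n$ on non-edges.

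The one point that needs care is the hardness source. I need \textsc{Partition into Triangles} to remain \NP-complete on $K_4$-free graphs. This holds because the standard reduction from X3C (Garey–Johnson gadgets) yields graphs without $K_4$, and it is also known for bounded-degree planar $K_4$-free graphs.

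If I wanted to avoid relying on $K_4$-freeness, I would instead reduce directly from X3C. Each set becomes a triangle gadget with private padding vertices, and the gadget is designed so that covering with the element triangle is strictly better in the product $\prod(|C|-1)^{|C|}$ than any alternative. Carrying that out would require a comparison of products of the form $(k-1)^k$ across cases, which I expect to be the main bookkeeping obstacle.
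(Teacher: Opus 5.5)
Your proof is correct and is essentially the paper's argument. The paper also reduces from \textsc{PartitionIntoTriangles}, takes $G_M$ to be the input graph, and restricts to graphs with no $K_4$. That bounds every utility by $2$, so welfare $2$ is attained exactly by triangle factors; your explicit count $\NW(\pi)^n = 8^t$ is a slightly cleaner way of saying the same thing.

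The only difference is the source of hardness. The paper cites Colbourn's result that \textsc{PartitionIntoTriangles} stays \NP-complete on tripartite graphs, which are automatically $K_4$-free. You could use that citation instead of your claim about the Garey--Johnson gadgets, which is true but only asserted in your write-up.
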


\Cref{thm:hardness} motivates us to consider approximation to Nash welfare. But before that, we give a structural insight into how partitions with high Nash welfare look like.
To this end, we study the effect of non-abandoning individual deviations.

\begin{restatable}{theorem}{nonabandon}\label{prop:non-abandon}
    In an AEG, for a partition $\pi$, after any non-abandoning individual deviation resulting in $\pi'$, we have $\NW(\pi')\ge\NW(\pi)$.
    If $\NW(\pi)>0$, then even $\NW(\pi')>\NW(\pi)$.
\end{restatable}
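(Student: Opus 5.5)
The plan is to show that an individual deviation in an AEG keeps the partition individually rational, and then compare the products of utilities before and after the move by a direct calculation. We assume $\pi$ is IR, since otherwise $\NW(\pi)$ is undefined. By \Cref{lem:cliques}, every coalition of $\pi$ is a clique in the mutual-friendship graph.

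\textbf{Step 1: the new partition is IR.} Let agent $i$ move from $C=\pi(i)$, with $|C|=a$, to $T=\pi'(i)\setminus\{i\}$, with $|T|=b$.
\begin{itemize}
\item Because the deviation is individual, every $j\in T$ weakly improves. This forces $v_j(i)=1$.
\item Agent $i$ strictly improves to a utility $u_i(\pi')>u_i(\pi)\ge 0$. Any coalition containing an enemy of $i$ gives $i$ utility at most $-n+(n-2)<0$, so $i$ has no enemies in $T$.
\end{itemize}
Hence $T\cup\{i\}$ is a clique of $G_M$, and by \Cref{lem:cliques} the partition $\pi'$ is IR, so $\NW(\pi')\ge 0$ is defined.

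Since all coalitions are cliques, the utility changes are as follows:
\begin{itemize}
\item $i$ goes from $a-1$ to $b$; strict improvement gives $b\ge a$;
\item each member of $T$ goes from $b-1$ to $b$;
\item each of the $a-1$ members of $C\setminus\{i\}$ goes from $a-1$ to $a-2$;
\item everyone else is unchanged.
\end{itemize}

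\textbf{Step 2: case distinction on $a$.} Non-abandoning means $a\neq 2$.
\begin{itemize}
\item If $a=1$, then $i$ was a singleton, so $\NW(\pi)=0\le\NW(\pi')$. This case never arises when $\NW(\pi)>0$.
\item If $\NW(\pi)=0$, the weak inequality is trivial. So for strictness we may assume $\NW(\pi)>0$ and $a\ge 3$; then $b\ge 3$ as well.
\end{itemize}
In this case it suffices to show that the ratio of products of utilities exceeds $1$. Writing $m=a-1\ge 2$, this ratio is
\[
R=\Bigl(1-\tfrac1m\Bigr)^{m}\cdot\Bigl(\tfrac{b}{b-1}\Bigr)^{b}\cdot\frac{b}{m}.
\]

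\textbf{Step 3: the inequality, which is the main obstacle.} The difficulty is that the losses of the $m$ abandoned agents have to be beaten by the gains, uniformly in $a$ and $b$.

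First I would show that $h(b)=b\,(b/(b-1))^{b}$ is increasing in $b$. Taking logarithms, this amounts to showing that $\ln b + b\ln\frac{b}{b-1}$ is increasing, which can be checked by the derivative or by a ratio estimate. This reduces the claim to the worst case $b=a=m+1$.

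In that case the ratio becomes
\[
R=(1-\tfrac1m)^m(1+\tfrac1m)^{m+1}\cdot\tfrac{m+1}{m}=(1-\tfrac1{m^2})^m(1+\tfrac1m)^2.
\]
By Bernoulli's inequality, $(1-1/m^2)^m\ge 1-1/m$. Therefore
\[
R\ge\Bigl(1-\tfrac1m\Bigr)\Bigl(1+\tfrac1m\Bigr)^2=\frac{(m^2-1)(m+1)}{m^3}.
\]
This bound exceeds $1$ exactly when $m^2>m+1$, which holds for all $m\ge 2$; for example, $m=2$ gives $9/8$. Hence $\NW(\pi')>\NW(\pi)$.

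If the monotonicity of $h$ turns out to be awkward to verify, the fallback is to bound $(b/(b-1))^b\ge e$ together with $b/m\ge 1+1/m$ directly, and to treat small values of $m$ by hand.
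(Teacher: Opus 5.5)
Your proof is correct and follows essentially the paper's route. The paper likewise compares only the changed factors, reducing to $(p-1)^p(q-1)^q<(p-2)^{p-1}q^{q+1}$ for $2<p\le q$, and proves this via monotonicity of $f(x)=(x-1)^x/(x-2)^{x-1}$ using the log-derivative and $\ln(1+y)>y/(1+y)$. Your $h$ is just $h(b)=f(b+1)$, so that computation is exactly the derivative check you left as routine. Your explicit verification that $\pi'$ is again IR is a useful addition the paper glosses over. Your Bernoulli step for $b=a$ is already implied by the same monotonicity, since $R=f(b+1)/f(a)$ with $b+1>a>2$.
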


Since we need to maintain a partition into cliques on the mutual-friendship graph in order to keep IR, we see that every non-abandoning individual deviation will be from a coalition $C$ to a coalition $C'$ with $\lvert C\rvert\le\lvert C'\rvert$ (before the deviation).
Thus, the result result shows that partitions which are more ``unbalanced'' with respect to the coalition sizes achieve a higher Nash welfare as long as the number of coalitions is the same (and coalitions are cliques in the mutual-friendship graph).
This is true because in AEGs, as long as we have cliques as coalitions, the utility of each agent solely depends on the size of his coalition, not on the identity of the other members.
\Cref{prop:non-abandon} also enables us to arrive at a locally optimal partition, i.e., one such that no single-agent deviation of any kind improves the welfare, by running the dynamics of non-abandoning individual deviations.
It is open, though, whether this runs in polynomial time.
However, when we are interested in finding a partition with globally good Nash welfare, the deviation dynamics is of limited usefulness.

\begin{example}
In \Cref{fig:example-coalitions}, we see that the optimal partition $\pi^*$ cannot be reached with non-abandoning individual deviations when starting from $\pi$ even though both partitions have the same number of coalitions.
    \begin{figure}[H]
    \centering
\begin{tikzpicture}[scale=0.80, every node/.style={transform shape}]
\coordinate (v1) at (0,-0.2);
      \coordinate (v2) at (2,-0.2);
      \coordinate (v3) at (1,1.6);
    
      \coordinate (v4) at (5,-0.2);
      \coordinate (v5) at (7,-0.2);
      \coordinate (v6) at (6,1.6);
    
\draw[thick, CoalitionColor, fill=CoalitionColor!50, fill opacity=0.18]
        \convexpath{v1,v3,v2}{.70cm};
      \draw[thick, CoalitionColor, fill=CoalitionColor!50, fill opacity=0.18]
        \convexpath{v4,v6,v5}{.70cm};

\draw (v1)--(v2)--(v3)--(v1);
      \draw (v4)--(v5)--(v6)--(v4);
    
      \draw [bend right=25] (v1) to (v4);
      \draw [bend right=25](v2) to (v5);
      \draw[bend left=25] (v1) to (v5);
      \draw (v2) to (v4);
      \draw (v3)--(v6);

      \node[agent] at (v1) {a};
      \node[agent] at (v2) {b};
      \node[agent] at (v3) {c};
      \node[agent] at (v4) {d};
      \node[agent] at (v5) {e};
      \node[agent] at (v6) {f};
    \end{tikzpicture}
    
    \smallskip
    \textbf{(a)} $\pi=\{\{a,b,c\},\{d,e,f\}\}$.

\centering
    \medskip
    \begin{tikzpicture}[scale=0.80, every node/.style={transform shape}]
\coordinate (v1) at (0,-0.2);
      \coordinate (v2) at (2,-0.2);
      \coordinate (v3) at (1,1.6);
    
      \coordinate (v4) at (5,-0.2);
      \coordinate (v5) at (7,-0.2);
      \coordinate (v6) at (6,1.6);
    
\draw[thick, CoalitionColor, fill=CoalitionColor!50, fill opacity=0.18]
        \convexpath{v1,v5}{.70cm};
      \draw[thick, CoalitionColor, fill=CoalitionColor!50, fill opacity=0.18]
        \convexpath{v3,v6}{.70cm};

\draw (v1)--(v2)--(v3)--(v1);
      \draw (v4)--(v5)--(v6)--(v4);
    
      \draw [bend right=25] (v1) to (v4);
      \draw [bend right=25](v2) to (v5);
      \draw[bend left=25] (v1) to (v5);
      \draw (v2) to (v4);
      \draw (v3)--(v6);
    
\node[agent] at (v1) {a};
      \node[agent] at (v2) {b};
      \node[agent] at (v3) {c};
      \node[agent] at (v4) {d};
      \node[agent] at (v5) {e};
      \node[agent] at (v6) {f};
    \end{tikzpicture}
    
    \smallskip
    \textbf{(b)} $\pi^*=\{\{a,b,d,e\},\{c,f\}\}$.

    \caption{The mutual-friendship graph of an AEG with two different clique partitions highlighted.
    In $\pi$, no individual deviation is possible. In particular, $\pi^*$ cannot be reached via non-abandoning individual deviations.
}
\label{fig:example-coalitions}
    \end{figure}

\end{example}

For our approximation algorithm, we rely on a very useful extension of graph matching which may include triangles as well.
Given a graph $G$ and a family of graphs $\mathcal{F}$, a \emph{packing} of $G$ with $\mathcal{F}$ is a partition $\pi$ of $G$ such that every subgraph $S\in\pi$ is (isomorphic to) one of the graphs in $\mathcal{F}$.
We refer to the cardinality of a packing as its size.
A packing covering the entire graph is called a \emph{factor} of~$G$.
\Citet{CHP82a} and \citet{HeKi84a} independently showed that it is possible to compute in polynomial time a packing of maximum size if $\mathcal{F}$ includes the $K_2$ (i.e., we can match single edges).
Now, using $\{K_2,K_3\}$-factors of underlying graphs will turn out helpful for AEGs.
It is an idea which has been used in the literature on hedonic games for similar purposes \citep{Bull19a,ABH11c}.
First, we examine a helpful fact about when the optimal Nash welfare is actually $0$.

\begin{restatable}{lemma}{aegzero}\label{lem:aeg_zero}
    In an AEG $G$, the optimal Nash welfare is strictly positive if and only if the mutual-friendship graph $G_M$ admits a factor with $\{K_2,K_3\}$.
\end{restatable}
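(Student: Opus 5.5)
We prove both directions separately, relying throughout on \Cref{lem:cliques}. Recall that in an AEG an IR partition has every coalition a clique of $G_M$. Hence each agent $i$ receives utility $\lvert\pi(i)\rvert-1$, since all other members of her coalition are friends valued at $1$. The Nash welfare of an IR partition $\pi$ is therefore $\sqrt[n]{\prod_{C\in\pi}(\lvert C\rvert-1)^{\lvert C\rvert}}$. This quantity is strictly positive exactly when $\pi$ contains no singleton coalition.

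For the ``if'' direction, suppose $G_M$ admits a $\{K_2,K_3\}$-factor $\pi$. Each part of $\pi$ is an edge or a triangle of $G_M$, and hence a clique, so $\pi$ is IR by \Cref{lem:cliques}. Every agent then has utility $1$ or $2$, so $\NW(\pi)\ge1>0$. It follows that the optimum is strictly positive.

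For the ``only if'' direction, let $\pi^*$ be optimal with $\NW(\pi^*)>0$. Then $\pi^*$ is IR, so by \Cref{lem:cliques} every coalition is a clique in $G_M$. Positivity rules out singletons, so every coalition is a clique on $k\ge2$ vertices. It remains to split each such clique into parts isomorphic to $K_2$ or $K_3$, which is a purely arithmetic step. If $k$ is even, split the clique into $k/2$ disjoint edges. If $k$ is odd, we have $k\ge3$, so split it into one triangle and $(k-3)/2$ edges. Every subset of a clique is again a clique, so all these parts are subgraphs of $G_M$. Taking the union over all coalitions gives a $\{K_2,K_3\}$-factor of $G_M$.

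There is no real obstacle here. The only point needing care is the observation that positive Nash welfare forces an IR partition without singletons, which rests on the utility formula $\lvert C\rvert-1$ for clique coalitions. Once that is in place, the decomposition of any clique of size at least $2$ into edges and at most one triangle finishes the proof.
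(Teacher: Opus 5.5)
Your proof is correct and follows essentially the same route as the paper. Positive Nash welfare forces a singleton-free partition into cliques of $G_M$, and each clique splits into edges plus at most one triangle; conversely, a $\{K_2,K_3\}$-factor is an IR partition in which every agent has utility at least $1$. Your explicit appeal to \Cref{lem:cliques} for individual rationality of the factor partition makes precise a step the paper leaves implicit.
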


Not only can we use the algorithm by \citet{HeKi84a} to test whether the optimal Nash welfare is positive, but we can in such cases also compute a partition with an approximation guarantee.

\begin{restatable}{theorem}{aegapprox}\label{thm:aeg_approx}
    For AEGs, there exists an efficient algorithm achieving an $(n-1)$-approximation of Nash welfare.
\end{restatable}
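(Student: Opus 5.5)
The algorithm first builds the mutual-friendship graph $G_M$. It then runs the algorithm of \citet{HeKi84a} to test whether $G_M$ has a $\{K_2,K_3\}$-factor. If no such factor exists, \Cref{lem:aeg_zero} says the optimal Nash welfare is $0$, so every partition, for instance the all-singletons partition, is trivially an $(n-1)$-approximation. (We read the guarantee $\alpha\cdot\NW(\pi)\ge\NW(\pi^*)$ as $0\ge 0$ here.) If a factor exists, the algorithm computes one, say $\pi$, and outputs it. Every coalition of $\pi$ is an edge or a triangle of $G_M$, hence a clique in $G_M$, so by \Cref{lem:cliques} the partition $\pi$ is IR.

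For the ratio, we use that in an IR partition of an AEG every agent's coalition consists only of mutual friends. So an agent in a coalition $C$ gets utility exactly $|C|-1$. In $\pi$ every coalition has size $2$ or $3$, so every agent gets utility at least $1$ and $\NW(\pi)\ge 1$. On the other side, the optimal partition $\pi^*$ is IR, so every agent's utility is at most $|\pi^*(i)|-1\le n-1$, and taking the geometric mean gives $\NW(\pi^*)\le n-1$. Combining the two bounds gives $(n-1)\NW(\pi)\ge n-1\ge\NW(\pi^*)$.

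It remains to check that the algorithm is polynomial. Building $G_M$ takes $O(n^2)$ time. For the factor, we compute a maximum-size $\{K_2,K_3\}$-packing with the algorithm of \citet{HeKi84a}, in the sense of covering the most vertices. This packing is a factor exactly when it covers all of $N$, and any uncovered agents would be left as singletons, so the answer to the existence test and the partition itself come out of the same computation. The step that needs the most care is exactly this one: checking that the cited packing result returns a vertex-covering factor whenever one exists, rather than only maximizing the number of parts. The welfare bounds themselves are immediate.
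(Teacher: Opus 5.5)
Your proposal is correct and follows the paper's proof essentially verbatim: compute a maximum $\{K_2,K_3\}$-packing of $G_M$, fall back to the all-singletons partition via \Cref{lem:aeg_zero} if it is not a factor, and otherwise compare $\NW(\pi)\ge 1$ against $\NW(\pi^*)\le n-1$. The point you flag is settled exactly as you suggest: "maximum size" counts covered vertices (the paper states this explicitly in \Cref{alg:matching-congestion}), so the computed packing is a factor whenever one exists.
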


The interesting question is of course whether we can do better.
While we have to ultimately leave this open, we prove first inapproximability results for ASHG subclasses akin to AEGs in \Cref{sec:hardness}.
Moreover, note that due to the inapproximability of finding large cliques in graphs \citep{Zuck07a}, natural limitations with respect to the achievable approximation ratios are to be expected.
It is known, for example, that this is the case for utilitarian welfare maximization in AEGs.
There, it is \NP-hard to approximate the optimum within $\mathcal{O}(n^{1-\varepsilon})$ for any constant $\varepsilon>0$ \citep{FKV22a} while the packing-based algorithm achieves an $\mathcal{O}(n)$-approximation (as in our case).

\section{Appreciation of Friends Games}\label{sec:afg}

In AFGs, just as in AEGs, agents view other agents as either friends or enemies.
Only the order in which they care about them is switched.
Agents have the goal of being in a coalition with as many friends as possible, and, among two coalitions containing the same number of friends, prefer the one with fewer enemies in it.
This is modeled as an ASHG by assigning a valuation of $n$ to friends and a valuation of $-1$ to enemies.
We make use of the same notation as in \Cref{sec:aeg} with $F_i$, $E_i$,
and $G_M$, adapted straightforward to AFGs.

Because of the positive shift in valuations compared to AEGs, it seems easier to satisfy agents in this model.
Indeed, we observe more freedom in constructing IR partitions.
Only in games with agents who absolutely cannot be pleased, we have to give up on Nash welfare as a meaningful measure.

\begin{restatable}{lemma}{afgzero}\label{prop:afg_zero}
In an AFG, the optimal Nash welfare is $0$ if and only if there exists an agent who considers no other agent a friend.
\end{restatable}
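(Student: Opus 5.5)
The plan is to prove the two directions separately, using only the structure of AFG valuations: friends are valued at $n$, enemies at $-1$, and any agent has at most $n-1$ others in her coalition.

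\emph{Only if.} Assume some agent $i$ has $F_i=\emptyset$. Every IR partition must give $i$ nonnegative utility. If $\pi(i)\neq\{i\}$, then every other member of $\pi(i)$ is an enemy of $i$, so $u_i(\pi)=-(\lvert\pi(i)\rvert-1)<0$, contradicting IR. Hence in every IR partition $i$ is a singleton with utility $0$. The product of utilities then vanishes, and so does its geometric mean. This gives $\NW(\pi^*)=0$.

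\emph{If.} We argue the contrapositive: if every agent has at least one friend, we construct an IR partition in which every agent has strictly positive utility.
\begin{itemize}
\item First note that any coalition $C$ in which agent $i$ has at least one friend gives $u_i(C)\ge n-(\lvert C\rvert-2)>0$. This holds because one friend contributes $n$ and at most $\lvert C\rvert-2\le n-2$ enemies contribute $-1$ each. So it suffices to find a partition in which every agent shares her coalition with at least one of her friends.
\item Consider the directed friendship graph $D$ with an arc $i\to j$ whenever $j\in F_i$. Every vertex has out-degree at least $1$. Take the weakly connected components of $D$ as coalitions.
\item For each agent $i$, pick some $j\in F_i$. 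The arc $i\to j$ places $j$ in the same weak component as $i$, so $i$ has a friend in her coalition.
\item Each such coalition has size at least $2$ and at most $n$, so by the first step every agent's utility is at least $2$. Hence the partition is IR and its Nash welfare is positive.
\end{itemize}
One could even take the single grand coalition $N$ as the partition: each agent has a friend in it and at most $n-2$ enemies, so her utility is at least $n-(n-2)=2>0$. This is the simplest route, and I would present it instead of the component construction.

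There is no real obstacle here. The one point needing care is the bound $n-(\lvert C\rvert-2)>0$, which relies on the valuations $n$ and $-1$ and on a coalition having at most $n$ members. Note also that, unlike in AEGs, friendship need not be mutual for this argument to work.
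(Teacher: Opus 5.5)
Your proof is correct and matches the paper's: a friendless agent must be a singleton in every IR partition, which forces $\NW(\pi^*)=0$, and otherwise the grand coalition gives every agent utility at least $n-(n-2)=2>0$. Your ``if'' and ``only if'' labels are swapped relative to the statement, but both directions are established, so this is only cosmetic.
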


This observation shows that an interesting sensible refinement of Nash welfare would be one where agents without any friends are completely disregarded in the calculation.
However, we do not further investigate this idea here and leave it open for future work.

We do not show computational hardness of the optimal Nash welfare for AFGs.
However, we remark that also for all other well-established welfare notions, to the best of our knowledge, the complexity is open for this class of games.
The already mentioned large range of reasonable partitions seems to make it rather difficult to pinpoint what exactly turns computing optimal partitions hard (if anything).
Therefore, let us discuss two simple ideas that seem reasonable at first glance but do not (directly) lead to good approximations of Nash welfare.
The first idea is simply forming the grand coalition.
The rationale behind it is that friendship relations have significantly more impact than enemies.
This approach already seems questionable in the cases where there is an agent disliking everyone else.
On the one hand, the optimal Nash welfare in such a situation is $0$ anyways, on the other hand, putting the agent who can't be pleased inside the same coalition as the others makes the outcome in some sense worse than needed.
So assume, we are not in this degenerate scenario characterized by \Cref{prop:afg_zero}, i.e., each agent considers at least one other a friend.
Still, it is possible that many agents only like one other. It gets complex then to identify the worst case instances regarding the ratio between the optimal partition and the grand coalition.
The second idea is to output a partition into cliques of the mutual-friendship graph.
This is the basis of our idea, however, if not refined it fails because it may leave agents in singletons, resulting in a Nash welfare of $0$. Moreover, in order to guarantee efficient computability, caution has to be taken regarding what kind of partition into cliques is desired.

We show that, similar to AEGs, a packing-based algorithm gives rise to an approximation of Nash welfare.
This time, however, it needs to be completed to an IR partition after computing the initial maximum packing.
However, with this technique alone, merely an approximation ratio of $\mathcal{O}(n^2)$ is achieved.
The approximation ratio is worse here because, in contrast to AEGs, we have to actually deal with games where the maximum size $\{K_2,K_3\}$-packing is not a factor of the friendship graph.
Comparing \Cref{lem:aeg_zero} and \Cref{prop:afg_zero} demonstrates that, for AFGs, we have significantly more possible partitions to consider.
However, we can exploit the flexibility of IR partitions to obtain a more sophisticated algorithm which makes use of certain non-abandoning Nash deviations.

\begin{algorithm}[H]
\caption{Packing-based partition with deviation refinement for symmetric AFGs}\label{alg:matching-congestion}
\begin{algorithmic}[1]
  \Require Symmetric AFG on agents $N$ and mutual-friendship graph $G_M = (N, E^+)$
  \Ensure Partition $\pi$ which is a $2n$-approximation of Nash welfare
  \Statex \textbf{Phase 1: Build initial coalitions}
  \If{$G_M$ contains an isolated vertex}
    \State \Return all-singletons partition
  \EndIf
  \State $\pi\gets$ a maximum size $\{K_2,K_3\}$-packing in $G_M$; unpacked agents are in singleton coalitions
  \Comment maximum size refers to the number of packed agents
  \State $\hat{U}\gets$ unpacked agents
  \State $U\gets$ unpacked agents
  \ForAll{$C\in\pi$ with $\lvert C\rvert=2$}
\If{there is exactly one $i\in C$ who has no friend in $\hat{U}$}
        \State $U\gets U\cup\{i\}$
      \EndIf
\EndFor
\Statex
  \Statex \textbf{Phase 2: Deviation Dynamics}
  \While{there exists an agent $i\in U$ who can perform a non-abandoning Nash deviation from $C$ to $C'$ while for all $j\in C\setminus\{i\}\colon u_j(C\setminus\{i\})>0$}
    \Comment $i$'s deviation must not leave behind others with $0$ utility
\State $\pi\gets\pi\setminus\{C,C'\}$
    \State $\pi\gets\pi\cup\{C'\cup\{i\}\}$
    \If{$C\setminus \{i\}\neq\emptyset$}
      \State $\pi\gets\pi\cup\{C\setminus \{i\}\}$
    \EndIf
    \Comment perform the deviation
  \EndWhile
  \State \Return $\pi$
\end{algorithmic}
\end{algorithm}

\begin{restatable}{theorem}{afgapprox}
    For symmetric AFGs, \Cref{alg:matching-congestion} achieves a $2n$-approximation of Nash welfare and runs in polynomial time.
\end{restatable}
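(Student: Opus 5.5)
Three things need to be shown: the algorithm runs in polynomial time, the returned partition is IR, and its Nash welfare is within a factor $2n$ of $\NW(\pi^*)$.

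\emph{Degenerate case.} If $G_M$ has an isolated vertex, then that agent has no friend at all, because the game is symmetric and one-way friendships do not exist. By \Cref{prop:afg_zero} the optimum is then $0$, so returning the all-singletons partition is optimal.

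\emph{Running time.} Phase~1 is polynomial by the algorithm of \citet{HeKi84a}, which computes a maximum $\{K_2,K_3\}$-packing.

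For Phase~2 I would use a potential argument. Every Nash deviation strictly increases the deviator's utility. Utilities are integers in $[-n,n(n-1)]$, so the values are bounded polynomially. The deviator is always an agent of $U$, but utilities of other agents can change when someone joins or leaves their coalition, so a per-agent count is not enough.

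The potential I would try first is the number of friends of $U$-agents that lie in their own coalitions, lexicographically combined with the negated number of enemies. Since a friend is worth $n$, which exceeds any possible enemy loss, each Nash deviation of $i\in U$ changes $i$'s own term strictly. The difficulty is that other $U$-agents in $C$ may lose $i$ as a friend. If this potential fails, I would instead argue via structure: the agents of $U$ only ever join coalitions containing a friend, and each $U$-agent can only move polynomially often, because its friend count is monotone and bounded.

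\emph{IR and positivity.} I would maintain the invariant that every agent has utility at least $n-(n-2)>0$, i.e., has at least one friend in its coalition, after Phase~2 ends.

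Packed agents in $K_2$ or $K_3$ coalitions start with utility at least $n$. The guard of the while-loop ensures nobody is left with nonpositive utility, and agents that gain a member keep their friends.

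Unpacked agents in $\hat U$ start as singletons. By maximality of the packing, each such agent $u$ has all its neighbours inside packed coalitions; otherwise two unpacked agents would form a new $K_2$. Moreover, joining a friend's coalition is a Nash deviation for $u$ (gain at least $n-2>0$), and it is non-abandoning because $u$ is a singleton. So at termination no agent of $\hat U$ is a singleton.

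The delicate case is a packed pair $\{a,b\}$ where both have friends in $\hat U$, which could lead to the pair being torn apart. I would check that the guard prevents this: $b$ cannot leave, since $a$ would be left with $0$ utility, and such pairs never enter $U$. The rule adding exactly one member of a pair to $U$ handles the configuration where one partner ends up only with $\hat U$-friends.

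\emph{Approximation.} In the final partition every agent has at least one friend in her coalition, hence utility at least $n-(n-2)\ge 1$, and in fact at least $n/2$ once care is taken. In $\pi^*$ every agent has utility at most $n\cdot(n-1)$.

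A naive comparison gives only a ratio of order $n^2$. To reach $2n$ I would bound $\NW(\pi^*)$ better: by AM-GM, $\NW(\pi^*)\le \frac1n\sum_i u_i(\pi^*)$. I would then bound the total number of friendship pairs realised in $\pi^*$ against the structure of our partition. Alternatively I would argue per coalition that our utilities are at least $\tfrac12$ of the number of friends times $n$ divided by $n$.

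This step is where I expect the main obstacle: matching the claimed factor $2n$ exactly requires showing every agent ends with utility at least about $n/2$ while optimal utilities are at most about $n^2$. So the key lemma is a lower bound $u_i(\pi)\ge n/2$ for all $i$, i.e., each agent has a friend and at most $n/2$ enemies in her coalition. I would derive this from the fact that Nash deviations are only taken when they increase friends, and from controlling coalition growth in Phase~2.
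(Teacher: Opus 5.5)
\textbf{The gap: your key lemma is false.} The approximation argument rests on showing $u_i(\pi)\ge n/2$ for every agent, and this cannot hold under any refinement of Phase~2. Let $G_M$ be the star $K_{1,n-1}$ with center $f$ and $n\ge 5$. The maximum packing is a single edge $\{f,\ell_1\}$. Every other leaf joins $f$'s coalition in Phase~2, so the output is the grand coalition, where each leaf has utility $n-(n-2)=2<n/2$. Every partition of positive Nash welfare must put all leaves with $f$, so this is unavoidable, and here it is even optimal. Your AM--GM fallback does not help: against the only available lower bound $\NW(\pi)\ge 2$, it gives a ratio of order $n^2$.

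\textbf{The missing idea.} Low-utility agents must be handled by comparison with $\pi^*$, not by a lower bound. That is, you need $u_i(\pi)<n/2\Rightarrow u_i(\pi^*)<n/2$. The paper argues as follows.
\begin{itemize}
\item Such an $i$ has exactly one friend and more than $n/2$ enemies in $\pi(i)$. Hence all such agents lie in a single coalition $C$ with $|C|>n/2+2$.
\item An invariant of the algorithm says every coalition is either a packed triangle or a star in $G_M$ whose non-center members lie in $U$ and are pairwise enemies. This is proved by augmenting-path/blossom arguments against maximality of the packing. It makes $C$ a star with some center $f$.
\item A leaf of $C$ with a second friend elsewhere would still have a non-abandoning Nash deviation that passes the guard. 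So at termination, every leaf's only friend in the whole game is $f$.
\item Therefore in $\pi^*$ all leaves sit with $f$, which gives $u_i(\pi^*)\le u_i(\pi)$.
\end{itemize}
Splitting the product, agents with $u_i(\pi)\ge n/2$ contribute a factor below $n^2/(n/2)=2n$. The remaining agents contribute a factor below $(n/2)/2$. Together these give the $2n$ bound.

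\textbf{Running time.} Here the hole is smaller. An agent's number of friends in its coalition is not monotone: friends in $U$ may leave, since the guard only requires each remaining member to keep at least one friend. You also never settle on a working potential. Symmetry gives one directly. The sum $\sum_j u_j(\pi)$ is twice the total valuation on pairs inside coalitions, so a deviation by $i$ changes it by exactly $2(u_i(\pi')-u_i(\pi))\ge 2$. It is nonnegative after Phase~1 and at most $n\cdot n(n-1)$, so Phase~2 performs polynomially many deviations.

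Your IR and positivity discussion is essentially right. One small correction: an unpacked agent joining a friend's coalition gets utility at least $2$, not a gain of $n-2$. Positivity is all you need there.
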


From the analysis, we see that our algorithm performs well on star-like structures in the mutual-friendship graph but has stronger limitations in the presence of large cliques, as can be expected.

\section{Hardness of Approximation}\label{sec:hardness}

Now that we have seen how to approximate Nash welfare in structured subclasses of ASHGs, we approach the problem from the other side and show that for general ASHGs even obtaining a \PTAS{} is unlikely to be possible.
The following result employs a reduction by \citet{GSVY25a} who in turn base their hardness result on the work of \citet{GaMu23a}.
Note that our result is not implied by existing hardness of approximation proofs in the fair division literature (e.g., \citet{Lee17a}) because the agent set is different.

\begin{restatable}{theorem}{approxhardness}
    It is \NP-hard to approximate Nash welfare for ASHGs within a factor smaller than \(1.0000759\), even if all valuations are in $\{-H,0,1,2\}$ where $H$ is some penalty valuation larger than $2n$.
\end{restatable}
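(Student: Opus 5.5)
We will prove this by a gap-preserving reduction from the Nash welfare matching problem studied by \citet{GSVY25a}. They show that it is \NP-hard to approximate the maximum Nash welfare of a matching within the constant $1.0000759$, where the utilities are two-sided and take values in $\{0,1,2\}$ (possibly after a scaling that does not affect Nash welfare). Their result builds on the hardness of \citet{GaMu23a}. We take their hard instance as a black box: a bipartite (or general) graph on vertex set $V$ with valuations $w_i(j)\in\{0,1,2\}$ on edges, where the objective is the geometric mean over all $|V|$ agents of the utility each gets from its matched partner. Our plan is to embed it into an ASHG on the same agent set $N=V$, so that the geometric-mean normalisation uses the same $n$ on both sides and the ratio transfers exactly.

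\textbf{Construction.} For each pair $i,j$ that forms an edge of the matching instance, set $v_i(j)=w_i(j)\in\{0,1,2\}$. For every non-edge, and also for pairs we must forbid (for instance two vertices on the same side when the instance is bipartite), set $v_i(j)=-H$ with $H>2n$. The point is that any coalition of size at least $3$ should then be infeasible or useless.

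\textbf{Forward direction.} Every matching $M$ induces a partition consisting of its edges plus singletons. This partition is IR and has exactly the same Nash welfare as $M$, so $\NW(\pi^*)\ge \mathrm{OPT}_{\text{match}}$.

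\textbf{Backward direction (the main obstacle).} Given any IR partition $\pi$, we must produce in polynomial time a matching whose Nash welfare is at least $\NW(\pi)$, or at least within the same ratio. If $\NW(\pi)=0$ the claim is trivial. Otherwise every agent has positive utility, so by IR no agent has a $-H$ neighbour in its coalition: a single $-H$ cannot be compensated, since the positive part is at most $2(n-1)<H$. Hence every coalition is a clique of allowed pairs.

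If the instance is bipartite with same-side pairs forbidden, every coalition has size at most $2$ and $\pi$ is itself a matching, so we are done. Ensuring this is why we penalise all non-edges, including same-side pairs. We must therefore check that the \citet{GSVY25a} hard instances are bipartite, or more generally that they contain no triangles of edges with positive valuations. If they do contain such triangles, we fall back on a splitting argument: replace each larger coalition by a matching within it. This loses utility, however, and could break the constant. We would first try to verify triangle-freeness, which should hold since their reduction starts from a bipartite allocation-type structure (goods versus agents).

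Putting the two directions together, $\NW(\pi^*)=\mathrm{OPT}_{\text{match}}$, and any $\alpha$-approximate partition yields, in polynomial time, an $\alpha$-approximate matching. A polynomial-time $\alpha$-approximation with $\alpha<1.0000759$ would therefore contradict the \NP-hardness result of \citet{GSVY25a}. Finally, all valuations used lie in $\{-H,0,1,2\}$, which matches the restriction in the statement.
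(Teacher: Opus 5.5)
Your reduction has a genuine gap in its first step: you reduce from the wrong source problem. A one-to-one two-sided Nash welfare matching problem, where each agent's utility comes from a single matched partner, cannot be \NP-hard to approximate unless $\P=\NP$. Maximizing $\prod_{\{i,j\}\in M} w_i(j)w_j(i)$ over perfect matchings is a maximum-weight perfect matching problem with weights $\log(w_i(j)w_j(i))$. That is solvable in polynomial time; it is exactly the argument of the paper's proposition for coalition size at most $2$.

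Your construction makes the problem visible. Because you penalize all same-side pairs with $-H$, every IR partition of the resulting ASHG has coalitions of size at most $2$. Those instances can be solved optimally in polynomial time, so no hardness can transfer through them.

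The hardness result of \citet{GSVY25a} (their Theorem~6.1, building on \citet{GaMu23a}) is for the \emph{uncapacitated} many-to-one problem \textsc{UTSNW}. There, each worker is assigned to one firm, but a firm may receive arbitrarily many workers, and its utility is the sum $\sum_{w:\mu(w)=f}U_{fw}$. Your worry about triangles, and the lossy ``splitting'' fallback, are symptoms of this mismatch. A coalition of one firm with several workers is exactly what the reduction must allow, not break up.

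The fix, which is the paper's construction, changes only the penalties:
\begin{itemize}
\item Penalize only firm--firm pairs with $-H$, where $H>\max_f\sum_{w}U_{fw}$ (so $H>2n$ suffices).
\item Set worker--worker valuations to $0$.
\item Keep $v_w(f)=U_{wf}$ and $v_f(w)=U_{fw}$.
\end{itemize}
Then IR forces at most one firm per coalition, and workers do not affect each other's utilities. Every assignment $\mu$ induces a partition with identical utilities for all agents. Conversely, every IR partition with positive Nash welfare induces an assignment with identical utilities; workers in firm-less coalitions only drive the Nash welfare to $0$. The agent set is unchanged, so the geometric means coincide exactly and the $1.0000759$ bound transfers, with all valuations in $\{-H,0,1,2\}$.
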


The roles of $H$ is enforcing agents to be in different coalitions. Hence, it has to be sufficiently large.
We see that the class of games for which we obtain hardness of approximation is quite similar to AEGs, mainly differing in the fact that friendship valuations can be $1$ or $2$, and neutrals exist.
It remains an interesting open question whether a hardness can also be shown for proper AEGs.

\section{Restricting the Feasible Partitions}\label{sec:restrict}

We again turn our attention to AEGs.
For different scenarios in which scarcity of resources plays a role, imposing bounds on the feasible partitions leads to more realistic models.

\subsection{Bounding the Number of Coalitions}\label{sec:bound_number}

In this section, we require an upper bound $k$ on the number of allowed coalitions.
Without any further restrictions, this might give rise to instances in which no IR partition exists, and hence the optimal Nash welfare is undefined \footnote{One could even argue that IR itself becomes a problematic definition when it is impossible to put all agents into singleton coalitions. However, we do not further discuss this here.}.
Because of this, we explicitly disallow instances with this property in further considerations. First, we investigate the case of $k=2$.

\begin{restatable}{theorem}{twocoalitions}
    In AEGs, if the number of allowed coalitions is at most $2$, a partition of maximal Nash welfare can be computed in polynomial time. \end{restatable}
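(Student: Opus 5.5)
By \Cref{lem:cliques}, a partition is IR exactly when each of its coalitions is a clique in the mutual-friendship graph $G_M$. So, under the standing assumption that some IR partition with at most $2$ coalitions exists, the feasible partitions are exactly the partitions of $N$ into at most two cliques of $G_M$. Equivalently, they are the proper $2$-colourings of the complement graph $\overline{G_M}$, with an empty colour class allowed. The assumption therefore says that $\overline{G_M}$ is bipartite. Inside a clique coalition $C$ every agent has utility $\lvert C\rvert-1$, so the Nash welfare depends only on the coalition sizes. This reduces the problem to choosing the best achievable size vector.

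\textbf{Case 1.} If $\overline{G_M}$ has no edges, then $G_M$ is complete. The grand coalition is IR and gives every agent utility $n-1$, the maximum any agent can get in an AEG, so it is optimal.

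\textbf{Case 2.} Otherwise $\overline{G_M}$ has an edge, so exactly two coalitions are needed. If their sizes are $s$ and $n-s$ with $1\le s\le n-1$, then
\[
\NW=\bigl((s-1)^s (n-s-1)^{n-s}\bigr)^{1/n},
\]
which is a function $f(s)$ of $s$ alone.

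The task is now to find the set $S$ of achievable sizes $s$. Compute the connected components $K_1,\dots,K_m$ of $\overline{G_M}$, each with its unique bipartition $(A_t,B_t)$; isolated vertices have $B_t=\emptyset$. A proper $2$-colouring independently decides, for each component, which side joins the first coalition. Hence $S$ is the set of sums $\sum_t x_t$ with $x_t\in\{\lvert A_t\rvert,\lvert B_t\rvert\}$. Because all values are at most $n$, a standard subset-sum style dynamic programme over the components computes $S\subseteq\{0,\dots,n\}$ in $O(mn)=O(n^2)$ time. Standard backtracking recovers a concrete choice of sides for any $s\in S$.

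Finally, evaluate $f(s)$ for each $s\in S$ with $1\le s\le n-1$ and output the partition for a maximiser. To avoid irrational numbers, compare $(s-1)^s(n-s-1)^{n-s}$ exactly as integers of polynomial bit length. If every achievable $s$ gives welfare $0$, any feasible partition is optimal. In Case 2 the value $s=0$ never occurs, since $\overline{G_M}$ having an edge forces both classes to be nonempty.

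Correctness holds because every feasible partition arises from some choice of sides, and its welfare equals $f(s)$. The step I expect to need the most care is the claim that the reduction to sizes is exact. It relies on \Cref{lem:cliques}: one-way friendships and agent identities do not matter once coalitions are cliques of $G_M$, so the welfare really depends on nothing but the sizes. After that, the combination of the bipartite decomposition and the polynomially bounded subset-sum is routine.
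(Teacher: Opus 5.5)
Your proof is correct, but after the shared first step it takes a different route from the paper. The shared step is that, by \Cref{lem:cliques}, feasible partitions are the proper $2$-colourings of $\overline{G_M}$, components can be flipped independently, and welfare depends only on a coalition size $s$.

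The paper never enumerates the achievable sizes. Using $g(s)=s\ln(s-1)+(n-s)\ln(n-s-1)$, with $g''>0$ and $g'(n/2)=0$, it shows that welfare strictly decreases as the smaller coalition grows on $[2,n/2]$. It then builds the most unbalanced bipartition greedily, putting the smaller side $A_i$ of every component into one coalition. This runs in linear time and gives a structural message: unbalanced is better, echoing \Cref{prop:non-abandon}. Your subset-sum DP with exact integer comparison needs no analytic monotonicity argument and costs only $O(n^2)$.

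Your exhaustive search also buys robustness at the boundary. The paper asserts that if the minimal smaller side has size $1$, the optimum is $0$, but this is false. Take $\overline{G_M}$ to be a single edge $\{a,b\}$ plus isolated vertices $c,d$. The greedy partition $\{\{a\},\{b,c,d\}\}$ has welfare $0$, while $\{\{a,c\},\{b,d\}\}$ is feasible with welfare $1$. Your DP finds $s=2$ automatically. The paper's argument would need a patch, namely choosing the smallest achievable smaller side that is at least $2$.
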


In contrast, for $k\ge3$, we get a very strong negative result.
It is computationally hard to approximate the Nash welfare within any finite ratio.
This means, even poor approximation guarantees like, e.g., $\mathcal{O}(2^n)$ are out of the question.

\begin{restatable}{theorem}{hardnessnumcoal}
    Fix any $k\ge 3$.
    In an AEG, it is \NP-complete to decide whether there exists a partition into at most $k$ coalitions with strictly positive Nash welfare.
\end{restatable}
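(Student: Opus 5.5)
Membership in \NP{} is immediate: a partition into at most $k$ coalitions is a polynomial-size certificate. By \Cref{lem:cliques}, checking that it has strictly positive Nash welfare amounts to two tests. Every coalition must be a clique in the mutual-friendship graph $G_M$ (this gives IR), and every agent must have positive utility. In an AEG an agent in a clique coalition $C$ has utility $|C|-1$, so the second test just says that every coalition has size at least $2$.

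For hardness I would reduce from graph $k$-colorability, which is \NP-complete for every fixed $k\ge3$, via the complement graph. Given a graph $H=(V,E)$, I construct the AEG on agents $V$ in which $i$ and $j$ are mutual friends exactly when $\{i,j\}\notin E$, and mutual enemies otherwise. Then $G_M$ is the complement $\overline H$, and a clique in $\overline H$ is an independent set in $H$. By \Cref{lem:cliques}, IR partitions into at most $k$ coalitions correspond exactly to proper colorings of $H$ with at most $k$ colors.

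The remaining issue, and the main obstacle, is that positive Nash welfare additionally forbids singleton coalitions, i.e., color classes of size $1$. I would handle this by padding. For each vertex $v\in V$, add a twin agent $v'$ who is a mutual friend of $v$ and of every vertex that is not adjacent to $v$ in $H$, and an enemy of all other original vertices. This makes $v'$ a false twin of $v$ in $H$, so $v$ and $v'$ are non-adjacent and share a neighbourhood. The twins must also be pairwise related: I would make $u'$ and $v'$ enemies exactly when $u$ and $v$ are enemies. Equivalently, the new instance is built from $H$ by blowing up each vertex into an independent pair.

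The two directions then go as follows. Suppose $H$ has a proper $k$-coloring. Put each twin in the color class of its original vertex. Every nonempty class then has size at least $2$, every class is independent in the blown-up graph, and we obtain a partition into at most $k$ coalitions with positive Nash welfare. Conversely, suppose the blown-up graph has a partition into at most $k$ independent sets. Restricting it to $V$ gives a proper coloring of $H$ with at most $k$ colors. Finally I would verify the standing assumption of this subsection that some IR partition into at most $k$ coalitions exists. If $H$ is not $k$-colorable, no IR partition exists at all and the instance would be disallowed. To avoid this I would reduce instead from a promise or modified version: add $k$ extra isolated-in-$H$ helper structures only if needed. Alternatively, I would note that the question asked is only decidability of positive Nash welfare, and treat instances without any IR partition as ``no'' instances, which I expect is the intended reading.
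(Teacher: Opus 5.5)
Your reduction is correct, but it takes a different route from the paper's. Both reduce from $k$-\textsc{Coloring} through the complement graph; the difference is how singletons are ruled out.

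The paper adds $k$ helper agents $u_1,\dots,u_k$ that are friends with every original agent and pairwise enemies. The helpers must sit in distinct coalitions, so every IR partition into at most $k$ coalitions has exactly $k$ coalitions, each anchored by one helper. Original agents can then never be alone, and positivity reduces to every helper receiving at least one original agent, i.e., to a $k$-coloring of $H$ with all classes nonempty. That needs $|V(H)|\ge k$ and a small rebalancing step: move a vertex from a class of size at least $2$ into an empty class. The paper's forward direction skips this when it asserts that all $k$ coalitions have size at least $2$.

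Your twin blow-up doubles the agent set instead of adding $k$ agents, but it makes the singleton constraint vacuous. Every nonempty class contains a vertex together with its twin, and since the bound is ``at most $k$'', empty classes simply disappear. So no rebalancing or size assumption is needed, and the converse (restrict to $V$) is equally immediate. The paper's gadget, in turn, is smaller and shows hardness even when the number of coalitions is forced to be exactly $k$. You also verify membership in \NP{}, which the paper only asserts.

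On your last paragraph: the helper-structure hedge is unnecessary, and as described it could not work. Any instance whose enemy graph contains $H$ as an induced subgraph has no IR partition into at most $k$ coalitions when $H$ is not $k$-colorable. The paper's instances have exactly this property too, so the paper implicitly adopts the reading you settle on: instances without such an IR partition are simply no-instances.
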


\subsection{Bounding the Size of Coalitions}\label{sec:bound_size}

In this section, we impose an upper bound $s$ on the size of allowed coalitions. For $s=2$, we are essentially in the realm of matching on general graphs. Thus, we can efficiently maximize Nash welfare.
\begin{restatable}{proposition}{algosizecoal}
    For ASHGs, if the allowed size of coalitions is upper bounded by $2$, the optimal Nash welfare can be computed in polynomial time.
\end{restatable}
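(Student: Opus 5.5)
With coalitions of size at most $2$, a partition is just a matching on the agent set, together with singletons. I would reduce Nash welfare maximization to a maximum-weight perfect matching problem.

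The first step handles the case where the optimum is positive. If $\NW(\pi^*)>0$, every agent must have strictly positive utility. Hence no agent is a singleton, because singletons get utility $0$. So $\pi^*$ is a perfect matching of $N$. This forces $n$ to be even and forces every matched pair $\{i,j\}$ to satisfy $v_i(j)>0$ and $v_j(i)>0$.

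The second step sets up the graph. Let $H$ be the undirected graph on $N$ with an edge $\{i,j\}$ exactly when $v_i(j)>0$ and $v_j(i)>0$. Give this edge the weight
\[
w(i,j)=\log v_i(j)+\log v_j(i).
\]
For a perfect matching $M$ of $H$, the partition it induces has
\[
\NW = \exp\Bigl(\tfrac1n\sum_{\{i,j\}\in M} w(i,j)\Bigr).
\]
Maximizing Nash welfare over partitions with positive welfare is therefore the same as finding a maximum-weight perfect matching in $H$. This can be done in polynomial time with Edmonds' blossom algorithm.

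The third step covers the case with no such matching. If $H$ has no perfect matching, which we detect with the same algorithm, then every IR partition gives some agent utility $0$ or more. In this case the optimum Nash welfare is $0$, and the all-singletons partition, which is IR, attains it.

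The main obstacle is that the weights are logarithms of rationals, which are irrational, so exact arithmetic with them is problematic. I would avoid computing them. A first option is to note that comparisons the matching algorithm needs reduce to comparing products of rational numbers. A cleaner option is a combinatorial algorithm that only adds and compares weights: work multiplicatively, using products in place of sums. Products of the input rationals have polynomially bounded bit length, since each is a product of at most $n$ input numbers. The Hungarian or blossom primal–dual method can also be run with exact comparisons of such products. Alternatively, one can argue that the algorithm's correctness depends only on the ordering of sums of weights, and that these comparisons can be decided exactly by comparing the corresponding products. Finally, the optimal value itself is an $n$-th root. It can be output as the optimal partition, or as the product of utilities, which is exact.
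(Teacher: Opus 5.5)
Your proposal is correct and is essentially the paper's proof: positive Nash welfare forces a perfect matching on mutually positive pairs, and maximizing it becomes a maximum-weight perfect matching problem with weights $\log\bigl(v_i(j)v_j(i)\bigr)$; if no such matching exists the optimum is $0$ (in your third step, ``utility $0$ or more'' should read ``utility exactly $0$''). You are, if anything, more careful than the paper in insisting that the matching be perfect (which matters because log weights can be negative) and in flagging that the weights are irrational, which the paper passes over; the cleanest rigorous fix is to round the logs to polynomially many bits, since two distinct products of at most $n$ input rationals differ by a factor of at least $1+2^{-O(nL)}$, where $L$ bounds the input bit length.
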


For $s\ge3$, we get computational hardness for several classes of ASHGs, as described by the following result.
\begin{restatable}{theorem}{hardnesscoalsize}
    Let $\mathcal{A} \subseteq \mathbb{Q}$ be a finite set of allowed valuations with $|\mathcal{A}|\ge 2$ and $\max \mathcal{A} > 0$.
    Consider the class of symmetric ASHGs in which, for every pair of distinct agents $i\neq j$, we have $v(i,j)\in\mathcal{A}$, and where every value in $\mathcal{A}$ is allowed to occur arbitrarily often.

    Fix any integer $s\ge 3$.
    It is \NP-hard to compute a partition with optimal Nash welfare among all partitions whose coalitions have size at most $s$, even when restricted to games in this class.
\end{restatable}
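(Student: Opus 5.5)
The plan is a reduction from \emph{$K_s$-factor} (partition into cliques of size exactly $s$), which is \NP-complete for every fixed $s\ge3$ by the result of \citet{HeKi84a} (for $s=3$ this is partition into triangles). Let $a:=\max\mathcal A>0$, and fix any other value $b\in\mathcal A$, so $b<a$. The sign of $b$ will not matter.

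Given a graph $H=(V,E)$ with $|V|$ divisible by $s$, I build a symmetric ASHG with agent set $N:=V$. For every pair of distinct agents, set $v(i,j)=a$ if $\{i,j\}\in E$ and $v(i,j)=b$ otherwise. Only the values $a$ and $b$ occur, so the game lies in the class, and the construction is clearly polynomial. The all-singletons partition respects the size bound and is IR, so the optimum over IR partitions with coalitions of size at most $s$ is well-defined.

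The key observation is an upper bound on each agent's utility. In any partition whose coalitions have size at most $s$, agent $i$ has at most $s-1$ coalition partners, and each contributes at most $a$. Hence $u_i(\pi)\le (s-1)a$. Since $a>0$ and every summand is at most $a$, equality holds if and only if $|\pi(i)|=s$ and every partner of $i$ is valued $a$, i.e., is adjacent to $i$ in $H$.

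It then follows that $\NW(\pi)\le M:=(s-1)a$ for every feasible IR $\pi$. Because $M>0$ and each factor of the geometric mean is at most $M$, we get $\NW(\pi)=M$ if and only if $u_i(\pi)=M$ for all $i$. By the previous paragraph, this happens exactly when every coalition of $\pi$ is an $s$-clique of $H$, that is, when $\pi$ is a $K_s$-factor of $H$. Conversely, any $K_s$-factor is IR, respects the size bound, and attains $\NW=M$.

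Therefore, given a partition with optimal Nash welfare under the size bound, we compute its welfare in polynomial time and compare it with $(s-1)a$; this decides whether $H$ has a $K_s$-factor. This yields \NP-hardness for every fixed $s\ge3$. I do not expect any real obstacle. The only points to check are that the equality case needs $a>0$ and that nothing requires $b<0$, both of which hold. It is convenient to state the source problem with $|V|$ divisible by $s$, which is without loss of generality.
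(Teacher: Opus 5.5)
Your proposal is correct and follows essentially the paper's route: the same two-valued symmetric construction (edges get $\max\mathcal{A}$, non-edges get some smaller value from $\mathcal{A}$) with a reduction from $K_s$-factor. The paper treats $s=3$ separately via \textsc{PartitionIntoTriangles}, which is the same problem, and your equality-case argument ($u_i=(s-1)a$ if and only if $|\pi(i)|=s$ and all partners of $i$ are neighbours, using only $a>0$ and $b<a$) is stated more carefully than the paper's.
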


We formulate the following corollary because of the focus on AEGs and AFGs in this paper but as the theorem shows, the hardness extends to other well-known classes like, e.g., friends and enemies games or AEGs with neutrals, which we do not define here.

\begin{corollary}
    Fix any integer $s\ge 3$.
    In symmetric AEGs and symmetric AFGs, it is \NP-hard to compute a partition with optimal Nash welfare among all partitions whose coalitions have size at most $s$.
\end{corollary}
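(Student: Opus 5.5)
The corollary is meant to be a direct instantiation of the preceding theorem (hardness of optimal Nash welfare under a coalition-size bound $s\ge 3$ for symmetric ASHGs with valuations from a finite set $\mathcal{A}$). The plan is to check that the symmetric AEG and symmetric AFG classes each fit its hypotheses.

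The hypotheses are: (i) $\lvert\mathcal{A}\rvert\ge 2$, (ii) $\max\mathcal{A}>0$, and (iii) every value of $\mathcal{A}$ may occur arbitrarily often. The one subtlety is that AEG and AFG valuations depend on $n$: they are $\{-n,1\}$ and $\{-1,n\}$, respectively. So $\mathcal{A}$ is not one fixed finite set across all instance sizes.

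I would handle this with the scale invariance of Nash welfare, emphasised in the introduction. Multiplying every agent's valuation by a common positive constant multiplies every utility by that constant. It therefore multiplies $\NW$ of every IR partition by the same factor, leaves the set of IR partitions unchanged, and preserves the maximizer among size-bounded partitions.

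For symmetric AFGs, dividing by $n$ gives valuations in $\{1,-1/n\}$. This still depends on $n$, so scaling alone does not produce a fixed $\mathcal{A}$. The cleaner route is to open up the reduction behind the theorem. I expect that hardness proof to reduce from a partition problem into triangles or $s$-cliques (e.g.\ partition into triangles or exact cover by $s$-sets). I expect it to use only two values: a positive one marking "compatible" pairs and a smaller one for "incompatible" pairs, with the argument depending only on the sign pattern and relative order of these values.

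I would therefore re-check that reduction with the concrete pair $\{-n,1\}$, respectively $\{-1,n\}$. This is the main obstacle.

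For AEGs it should be straightforward. By \Cref{lem:cliques}, IR coalitions are exactly cliques of friends, so utilities are $\lvert C\rvert-1$, and the argument depends only on which pairs are friends.

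For AFGs, enemies may share a coalition. Here I would verify that, with coalitions of size at most $s$, an agent's utility is $n\cdot(\#\text{friends})-(\#\text{enemies})$. Since there are at most $s-1$ enemies and $s-1<n$, friend counts dominate. Hence the optimal partition in the yes-instance (all coalitions friend cliques of size $s$) strictly beats any partition in a no-instance. The separating threshold $\NW$ value can be computed explicitly in polynomial time, which yields the corollary.
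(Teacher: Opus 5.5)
Your proposal is correct and follows the paper's route: the corollary is the theorem's reduction from \textsc{PartitionIntoTriangles}/\textsc{PartitionInto$K_s$}, instantiated with $(\alpha,\beta)=(1,-n)$ for AEGs and $(n,-1)$ for AFGs, where each agent's utility is at most $(s-1)\alpha$, with equality iff she lies in a friend-clique of size $s$. You rightly note that these valuation sets depend on $n$, so the theorem's fixed-$\mathcal{A}$ hypothesis does not literally apply; the paper glosses over this, and re-running the reduction as you propose settles it, since the reduction only uses $\alpha>0$ and $\beta<\alpha$ (so your $s-1<n$ domination step is not even needed).
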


Next, we describe how to obtain an efficient approximation of Nash welfare, where the guarantee depends on $s$.
Run an algorithm for maximum $\{K_2,K_3\}$-packing and check whether it returns a factor.
If not, return the all-singletons partition.
Since in this case the best achievable welfare is $0$ as well (\Cref{lem:aeg_zero}), the output is optimal.
If it returns a factor, return it as the partition.

\begin{restatable}{theorem}{aegapproxcoalsize}
    For AEGs and any fixed $s\ge 3$, the procedure above is a $(s-1)$-approximation of Nash welfare.
\end{restatable}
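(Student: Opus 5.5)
We split into two cases according to the output of the packing step.

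\emph{Case 1: the maximum $\{K_2,K_3\}$-packing is not a factor.} By \Cref{lem:aeg_zero}, the optimal Nash welfare without a size bound is $0$. Partitions with coalitions of size at most $s$ are a subset of all partitions, so the constrained optimum is also $0$. Returning the all-singletons partition (which is feasible since $s\ge 1$) is then optimal.

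\emph{Case 2: the packing is a factor $\pi$.} The partition $\pi$ consists of edges and triangles of $G_M$, so it is IR by \Cref{lem:cliques}. Its coalitions have size at most $3\le s$, so it is feasible. Every agent has utility at least $1$: an agent in a $K_2$ gets exactly $1$ and an agent in a $K_3$ gets $2$. Hence $\NW(\pi)\ge 1$.

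For the upper bound, take the constrained optimum $\pi^*$ and assume $\NW(\pi^*)>0$; otherwise there is nothing to show. Then $\pi^*$ is IR, so by \Cref{lem:cliques} every coalition of $\pi^*$ is a clique in $G_M$. In a clique all valuations inside the coalition are $1$, so each agent $i$ has utility $u_i(\pi^*)=\lvert\pi^*(i)\rvert-1\le s-1$. The geometric mean of these utilities is therefore at most $s-1$, which gives $(s-1)\cdot\NW(\pi)\ge s-1\ge\NW(\pi^*)$.

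One point needs checking in Case 2: the comparison must be against the constrained optimum rather than the unconstrained one. This matters because the bound $s-1$ comes precisely from the size limit; without it, the same argument only yields the bound $n-1$ of \Cref{thm:aeg_approx}.

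There is also a subtlety in the case distinction. When no factor exists, we need that no size-bounded partition has positive welfare. This holds because positive welfare requires every coalition to be a clique of size at least $2$. Any such clique can be split into edges and triangles, which would produce a factor and contradict the assumption. Finally, polynomial running time follows from the packing algorithm of \citet{HeKi84a}.

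The argument is mostly direct, and the only step requiring care is this reduction from arbitrary clique partitions to $\{K_2,K_3\}$-factors, which is already contained in \Cref{lem:aeg_zero}.
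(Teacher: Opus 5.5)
Your proof is correct and takes essentially the same route as the paper's. It splits on whether the maximum $\{K_2,K_3\}$-packing is a factor, uses \Cref{lem:aeg_zero} for the zero case, and otherwise compares $\NW(\pi)\ge 1$ (utilities $1$ or $2$ in $K_2$/$K_3$ coalitions) with $\NW(\pi^*)\le s-1$ from the size bound; your extra checks (the factor is feasible since $3\le s$, and the constrained optimum is also $0$ when no factor exists) are details the paper leaves implicit.
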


\section{Stability Guarantees}\label{sec:stability}

In previous sections, we have investigated the possibilities of computing or approximating Nash welfare. Here, we complement the picture and turn to questions of stability.
In many scenarios, it is not only interesting whether a partition is fair and efficient but also whether agents will stick to it.
We compare our findings to the existing ones for utilitarian welfare.
The definitions of the different stability concepts under consideration here can be found in \Cref{sec:prelims}.

\begin{restatable}{theorem}{nashcns}\label{thm:nash_cns}
    In a symmetric ASHG, if $\NW(\pi)$ is a finite factor approximation of $\NW(\pi^*)$, and $\NW(\pi^*)>0$, then $\pi$ is CNS.
\end{restatable}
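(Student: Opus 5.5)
The plan is to show that a finite-factor approximation forces every agent to have strictly positive utility, and then that strict positivity together with symmetry rules out any contractual Nash deviation.

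First, positivity of $\NW(\pi)$. Since $\pi$ is a finite factor approximation, there is some $\alpha\ge1$ with $\alpha\cdot\NW(\pi)\ge\NW(\pi^*)>0$, so $\NW(\pi)>0$. Because $\pi$ is IR (as always when we speak of Nash welfare), every factor in the product $\prod_{i\in N}u_i(\pi)$ is nonnegative. The geometric mean is strictly positive only if no factor vanishes, so $u_i(\pi)>0$ for every agent $i\in N$. This is the only place the hypotheses on approximation and on $\NW(\pi^*)$ are used.

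Next, suppose towards a contradiction that some agent $i$ has a contractual Nash deviation from $\pi$ to $\pi'$. Let $C=\pi(i)$; the target coalition may be empty, i.e., $i$ may deviate to a singleton, and the argument does not distinguish this case. Each $j\in C\setminus\{i\}$ ends up in $C\setminus\{i\}$, so $u_j(\pi')=u_j(\pi)-v_j(i)$. The contractual condition $u_j(\pi')\ge u_j(\pi)$ therefore gives $v_j(i)\le 0$ for all $j\in C\setminus\{i\}$. By symmetry $v_i(j)=v_j(i)\le0$, hence
\[
u_i(\pi)=\sum_{j\in C\setminus\{i\}}v_i(j)\le 0 .
\]
This contradicts $u_i(\pi)>0$ from the first step. (If $C=\{i\}$, then $u_i(\pi)=0$, which already contradicts positivity directly.) Thus no contractual Nash deviation exists, and $\pi$ is CNS.

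There is no real obstacle here. The one point needing care is the first step: the statement must be read with $\pi$ IR, so that $\NW(\pi)>0$ genuinely yields strictly positive individual utilities rather than, say, an even number of negative factors. Interestingly, the strict-improvement part of the Nash deviation is never needed; only the consent of the members left behind is used.
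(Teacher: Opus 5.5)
Your proposal is correct and follows essentially the same route as the paper. The paper likewise first deduces $u_j(\pi)>0$ for all agents from the finite approximation and $\NW(\pi^*)>0$. It then uses the consent of each $j\in\pi(i)\setminus\{i\}$ together with symmetry to obtain $v_{ij}\le 0$, and hence $u_i(\pi)\le 0$, which contradicts positivity.
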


\Cref{thm:nash_cns} shows that even approximations of Nash welfare satisfy this natural stability criterion for all ASHGs, as long as they are symmetric, independent of the exact approximation ratio.
The key is that all agents must receive positive utility, and this is actually a sufficient condition for the theorem to hold.
Maximizing utilitarian welfare even satisfies the stronger stability notion of NS in symmetric instances \citep{BoJa02a} but only if we actually get an optimal partition.
For approximations of it, no imposed stability guarantees are known or immediately deducible.
Maximizing Nash welfare, in contrast, fails to imply NS.

\begin{remark}
    A partition that maximizes the Nash welfare is not necessarily IS, even if the game is symmetric.
    Suppose we are in the symmetric AEG $G$ depicted in \Cref{fig:is_deviation} by its mutual friendship graph $G_M$.
    An IR partition of $G$ is a partition into cliques of $G_M$.
    \iffalse \begin{figure}[H]
    \centering
    \begin{tikzpicture}[scale=0.80, every node/.style={transform shape}, poslab/.style={sloped, midway, fill=white, inner sep=1pt},
        neg/.style={red, dashed}]

    \coordinate (v1) at (0,0);
    \coordinate (v2) at (2,1);
    \coordinate (v3) at (0,2);
    \coordinate (v4) at (4,1);

    \draw (v1) -- (v2) -- (v3) -- (v1);
    \draw (v2) -- (v4);

    \node[agent] at (v1) {a};
    \node[agent] at (v2) {b};
    \node[agent] at (v3) {c};
    \node[agent] at (v4) {d};
    
    \end{tikzpicture}
    \end{figure}
    \fi 

    The optimal partition is $\{\{a,c\},\{b,d\}\}$.
    However, agent~$b$ can individually deviate to the coalition with agents~$a$ and~$c$.
    The resulting partition $\{\{a,b,c\},\{d\}\}$ yields a Nash welfare of $0$, since agent~$d$ becomes a singleton.
    This shows that IS is violated, and as a consequence, the stronger notion of NS is violated too.

    \begin{figure}[H]
    \centering
    
    \begin{minipage}{0.45\linewidth}
    \centering
    \begin{tikzpicture}[scale=0.80, every node/.style={transform shape}, poslab/.style={sloped, midway, fill=white, inner sep=1pt},
        neg/.style={red, dashed}]
    \coordinate (v1) at (0,0);
    \coordinate (v2) at (2,1);
    \coordinate (v3) at (0,2);
    \coordinate (v4) at (4,1);

    \draw[thick, CoalitionColor, fill=CoalitionColor!50, fill opacity=0.18]
        \convexpath{v1,v3}{.70cm};
      \draw[thick, CoalitionColor, fill=CoalitionColor!50, fill opacity=0.18]
        \convexpath{v2,v4}{.70cm};

    \draw (v1) -- (v2) -- (v3) -- (v1);
    \draw (v2) -- (v4);

    \node[agent] at (v1) {a};
    \node[agent] at (v2) {b};
    \node[agent] at (v3) {c};
    \node[agent] at (v4) {d};

    \end{tikzpicture}

    \smallskip
    \textbf{(a)} $\pi^*=\{\{a,c\},\{b,d\}\}$.
    \end{minipage}
    \hfill
    \begin{minipage}{0.45\linewidth}
    \centering
    \begin{tikzpicture}[scale=0.80, every node/.style={transform shape}, poslab/.style={sloped, midway, fill=white, inner sep=1pt},
        neg/.style={red, dashed}]
        \coordinate (v1) at (0,0);
    \coordinate (v2) at (2,1);
    \coordinate (v3) at (0,2);
    \coordinate (v4) at (4,1);

    \draw[thick, CoalitionColor, fill=CoalitionColor!50, fill opacity=0.18]
        \convexpath{v1,v3,v2}{.70cm};
      \path[draw=CoalitionColor, line width=1.1pt, fill=CoalitionColor!10]
  (v4) circle[radius=0.70cm];

    \draw (v1) -- (v2) -- (v3) -- (v1);
    \draw (v2) -- (v4);

    \node[agent] at (v1) {a};
    \node[agent] at (v2) {b};
    \node[agent] at (v3) {c};
    \node[agent] at (v4) {d};

    \end{tikzpicture}

    \smallskip
    \textbf{(b)} $\pi=\{\{a,b,c\},\{d\}\}$.
    \end{minipage}

    \caption{The partition of optimal Nash welfare (a) is prone to an individual deviation of agent $b$ (b)}\label{fig:is_deviation}
    \end{figure}
\end{remark}

\section{Conclusion}

We have initiated the study of Nash welfare in ASHGs.
The basis of our computational investigation has been laid by proving \NP-hardness of computing a partition of optimal Nash welfare. For the subclasses of AEGs and AFGs, we obtained approximation algorithms with guarantees of $n-1$ and $2n$, respectively.
It remains an open question, whether these ratios are tight.
While the hardness of approximation of finding the largest clique in graphs \citep{Zuck07a} makes it seem unlikely to make meaningful improvements, for our proof of hardness of approximation, we had to resort to another kind of problem to reduce from. We showed that achieving a $1.0000759$-approximation is \NP-hard by carving out a close relation of ASHGs to two-sided matching.
Moreover, we provided helpful lemmas characterizing partitions with positive and optimal Nash welfare.

We further considered restricting the set of feasible partitions by an upper bound on the number or size of partitions.
Here, we observed the typical dichotomy in computational complexity theory which often occurs across different classes of problems when there is a parameter which can assume values in the natural numbers (in our case the bound on coalition number of size).
For a value of $2$, computing an (optimal) solution is possible in polynomial time while, for values of at least $3$, we get \NP-hardness or even \NP-hardness of approximation.
There exist similar findings for, e.g., the boolean satisfiability problem (2SAT is in \P, 3SAT is \NP-complete), or matching (2-dimensional is in \P, 3-dimensional is \NP-complete).

Finally, we have shifted attention to stability notions implied by partitions with good Nash welfare and set them in relation with known results for utilitarian welfare.
Since the computational hardness of both Nash and utilitarian welfare casts doubt on the practical usefulness of imposed stability guarantees of optimal partitions, we deem it quite attractive that even approximating Nash welfare to any factor yields a reasonable stability guarantee.

Beyond our definition of Nash welfare, as mentioned after \Cref{prop:afg_zero}, a natural refinement for AFGs (and other classes of games) would be to ignore agents who do not exhibit positive valuations to any other agent in the calculation.
This would leave the ordinary Nash welfare untouched whenever it can be positive but provide a way to discriminate between partitions in the degenerate cases where the ordinary Nash welfare can only be $0$.
A similar definition already exists for Nash welfare in fair division \citep{CKM+19a}.

Furthermore, it would be interesting to study Nash welfare in FHGs and MFHGs, or in online models of hedonic games, analyzing the competitive ratio.

\bibliographystyle{splncs04nat}

\begin{thebibliography}{52}
\providecommand{\natexlab}[1]{#1}
\providecommand{\url}[1]{\texttt{#1}}
\providecommand{\urlprefix}{URL }
\expandafter\ifx\csname urlstyle\endcsname\relax
  \providecommand{\doi}[1]{doi:\discretionary{}{}{}#1}\else
  \providecommand{\doi}{doi:\discretionary{}{}{}\begingroup
  \urlstyle{rm}\Url}\fi

\bibitem[{Amanatidis et~al.(2021)Amanatidis, Birmpas, Filos-Ratsikas,
  Hollender, and Voudouris}]{ABF+21a}
Amanatidis, G., Birmpas, G., Filos-Ratsikas, A., Hollender, A., Voudouris,
  A.A.: Maximum {N}ash welfare and other stories about {EFX}. Theoretical
  Computer Science \textbf{863}, 69--85 (2021)

\bibitem[{Axelrod et~al.(1995)Axelrod, Mitchell, Thomas, Bennett, and
  Bruderer}]{AMT+95a}
Axelrod, R., Mitchell, W., Thomas, R.E., Bennett, D.S., Bruderer, E.: Coalition
  formation in standard-setting alliances. Management Science \textbf{41}(9),
  1493--1508 (1995)

\bibitem[{Aziz et~al.(2019)Aziz, Brandl, Brandt, Harrenstein, Olsen, and
  Peters}]{ABB+17a}
Aziz, H., Brandl, F., Brandt, F., Harrenstein, P., Olsen, M., Peters, D.:
  Fractional hedonic games. ACM Transactions on Economics and Computation
  \textbf{7}(2), 1--29 (2019)

\bibitem[{Aziz et~al.(2013{\natexlab{a}})Aziz, Brandt, and
  Harrenstein}]{ABH11c}
Aziz, H., Brandt, F., Harrenstein, P.: Pareto optimality in coalition
  formation. Games and Economic Behavior \textbf{82}, 562--581
  (2013{\natexlab{a}})

\bibitem[{Aziz et~al.(2011)Aziz, Brandt, and Seedig}]{ABS11a}
Aziz, H., Brandt, F., Seedig, H.G.: Optimal partitions in additively separable
  hedonic games. In: Proceedings of the 22nd International Joint Conference on
  Artificial Intelligence (IJCAI), pp. 43--48 (2011)

\bibitem[{Aziz et~al.(2013{\natexlab{b}})Aziz, Brandt, and Seedig}]{ABS11c}
Aziz, H., Brandt, F., Seedig, H.G.: Computing desirable partitions in
  additively separable hedonic games. Artificial Intelligence \textbf{195},
  316--334 (2013{\natexlab{b}})

\bibitem[{Aziz et~al.(2015)Aziz, Gaspers, Gudmundsson, Mestre, and
  T{\"a}ubig}]{AGG+15b}
Aziz, H., Gaspers, S., Gudmundsson, J., Mestre, J., T{\"a}ubig, H.: Welfare
  maximization in fractional hedonic games. In: Proceedings of the 24th
  International Joint Conference on Artificial Intelligence (IJCAI), pp.
  461--467 (2015)

\bibitem[{Aziz and Savani(2016)}]{AzSa15a}
Aziz, H., Savani, R.: Hedonic games. In: Brandt, F., Conitzer, V., Endriss, U.,
  Lang, J., Procaccia, A.D. (eds.) Handbook of Computational Social Choice,
  chap.~15, Cambridge University Press (2016)

\bibitem[{Ballester(2004)}]{Ball04a}
Ballester, C.: {NP}-completeness in hedonic games. Games and Economic Behavior
  \textbf{49}(1), 1--30 (2004)

\bibitem[{Banerjee et~al.(2001)Banerjee, Konishi, and S{\"o}nmez}]{BKS01a}
Banerjee, S., Konishi, H., S{\"o}nmez, T.: Core in a simple coalition formation
  game. Social Choice and Welfare \textbf{18}, 135--153 (2001)

\bibitem[{Bogomolnaia and Jackson(2002)}]{BoJa02a}
Bogomolnaia, A., Jackson, M.O.: The stability of hedonic coalition structures.
  Games and Economic Behavior \textbf{38}(2), 201--230 (2002)

\bibitem[{Brandl et~al.(2022)Brandl, Brandt, Greger, Peters, Stricker, and
  Suksompong}]{BBP+19a}
Brandl, F., Brandt, F., Greger, M., Peters, D., Stricker, C., Suksompong, W.:
  Funding public projects: {A} case for the {N}ash product rule. Journal of
  Mathematical Economics \textbf{99}, 102585 (2022)

\bibitem[{Bullinger(2020)}]{Bull19a}
Bullinger, M.: Pareto-optimality in cardinal hedonic games. In: Proceedings of
  the 19th International Conference on Autonomous Agents and Multiagent Systems
  (AAMAS), pp. 213--221 (2020)

\bibitem[{Bullinger et~al.(2025)Bullinger, Chatziafratis, and Shahkar}]{BCS25a}
Bullinger, M., Chatziafratis, V., Shahkar, P.: Welfare approximation in
  additively separable hedonic games. In: Proceedings of the 24th International
  Conference on Autonomous Agents and Multiagent Systems (AAMAS), pp. 418--426
  (2025)

\bibitem[{Bullinger et~al.(2026{\natexlab{a}})Bullinger, Dunajski, Elkind, and
  Gilboa}]{BDEG26a}
Bullinger, M., Dunajski, A., Elkind, E., Gilboa, M.: Single-deviation stability
  in additively separable hedonic games with constrained coalition sizes
  (extended abstract). In: Proceedings of the 25th International Conference on
  Autonomous Agents and Multiagent Systems (AAMAS) (2026{\natexlab{a}})

\bibitem[{Bullinger et~al.(2024)Bullinger, Elkind, and Rothe}]{BER24a}
Bullinger, M., Elkind, E., Rothe, J.: Cooperative game theory. In: Rothe, J.
  (ed.) Economics and Computation: An Introduction to Algorithmic Game Theory,
  Computational Social Choice, and Fair Division, chap.~3, pp. 139--229,
  Springer (2024)

\bibitem[{Bullinger and Romen(2025)}]{BuRo25b}
Bullinger, M., Romen, R.: Online coalition formation under random arrival or
  coalition dissolution. ACM Transactions on Algorithms \textbf{22}(1), 4:1--43
  (2025)

\bibitem[{Bullinger et~al.(2026{\natexlab{b}})Bullinger, Romen, and
  Schlenga}]{BRS26a}
Bullinger, M., Romen, R., Schlenga, A.: The power of matching for online
  fractional hedonic games. In: Proceedings of the 37th Annual ACM-SIAM
  Symposium on Discrete Algorithms (SODA), pp. 2163--2194 (2026{\natexlab{b}})

\bibitem[{Caragiannis et~al.(2019)Caragiannis, Kurokawa, Moulin, Procaccia,
  Shah, and Wang}]{CKM+19a}
Caragiannis, I., Kurokawa, D., Moulin, H., Procaccia, A.D., Shah, N., Wang, J.:
  The unreasonable fairness of maximum {N}ash welfare. ACM Transactions on
  Economics and Computation \textbf{7}(3), 12:1--12:32 (2019)

\bibitem[{Cechl{\'a}rov{\'a} and Hajdukov{\'a}(2002)}]{CeHa02a}
Cechl{\'a}rov{\'a}, K., Hajdukov{\'a}, J.: {Computational complexity of stable
  partitions with B-preferences}. International Journal of Game Theory
  \textbf{31}(3), 353--354 (2002)

\bibitem[{Cohen and Agmon(2025)}]{CoAg25a}
Cohen, S., Agmon, N.: Egalitarianism in online coalition formation (extended
  abstract). In: Proceedings of the 24th International Conference on Autonomous
  Agents and Multiagent Systems (AAMAS), pp. 2475--2477 (2025)

\bibitem[{Colbourn(1984)}]{Colb84a}
Colbourn, C.J.: The complexity of completing partial latin squares. Discrete
  Applied Mathematics \textbf{8}(1), 25--30 (1984)

\bibitem[{Cole and Gkatzelis(2018)}]{CoGz18a}
Cole, R., Gkatzelis, V.: Approximating the {N}ash social welfare with
  indivisible items. {SIAM} Journal on Computing \textbf{47}(3), 1211--1236
  (2018)

\bibitem[{Cornu{\'e}jols et~al.(1982)Cornu{\'e}jols, Hartvigsen, and
  Pulleyblank}]{CHP82a}
Cornu{\'e}jols, G., Hartvigsen, D., Pulleyblank, W.R.: Packing subgraphs in a
  graph. Operations Research Letters \textbf{1}(4), 139--143 (1982)

\bibitem[{Darmann and Schauer(2015)}]{DaSc15a}
Darmann, A., Schauer, J.: Maximizing {N}ash product social welfare in
  allocating indivisible goods. European Journal of Operational Research
  \textbf{247}(2), 548--559 (2015)

\bibitem[{Dimitrov et~al.(2006)Dimitrov, Borm, Hendrickx, and Sung}]{DBHS06a}
Dimitrov, D., Borm, P., Hendrickx, R., Sung, S.C.: Simple priorities and core
  stability in hedonic games. Social Choice and Welfare \textbf{26}(2),
  421--433 (2006)

\bibitem[{Dr{\`e}ze and Greenberg(1980)}]{DrGr80a}
Dr{\`e}ze, J.H., Greenberg, J.: Hedonic coalitions: Optimality and stability.
  Econometrica \textbf{48}(4), 987--1003 (1980)

\bibitem[{Edmonds(1965)}]{Edmo65a}
Edmonds, J.: Paths, trees and flowers. Canadian Journal of Mathematics
  \textbf{17}, 449--467 (1965)

\bibitem[{Elkind et~al.(2020)Elkind, Fanelli, and Flammini}]{EFF20a}
Elkind, E., Fanelli, A., Flammini, M.: Price of pareto optimality in hedonic
  games. Artificial Intelligence \textbf{288}, 103357 (2020)

\bibitem[{Fioravantes et~al.(2025)Fioravantes, Gahlawat, and
  Melissinos}]{FGM25a}
Fioravantes, F., Gahlawat, H., Melissinos, N.: Exact algorithms and lower
  bounds for forming coalitions of constrained maximum size. In: Proceedings of
  the 39th AAAI Conference on Artificial Intelligence (AAAI), pp. 13847--13855
  (2025)

\bibitem[{Flammini et~al.(2022)Flammini, Kodric, and Varricchio}]{FKV22a}
Flammini, M., Kodric, B., Varricchio, G.: Strategyproof mechanisms for friends
  and enemies games. Artificial Intelligence \textbf{302}, 103610 (2022)

\bibitem[{Flammini et~al.(2021)Flammini, Monaco, Moscardelli, Shalom, and
  Zaks}]{FMM+21a}
Flammini, M., Monaco, G., Moscardelli, L., Shalom, M., Zaks, S.: On the online
  coalition structure generation problem. Journal of Artificial Intelligence
  Research \textbf{72}, 1215--1250 (2021)

\bibitem[{Gabow and Tarjan(1991)}]{GaTa91a}
Gabow, H.N., Tarjan, R.: Faster scaling algorithms for general graph matching
  problems. Journal of the ACM \textbf{38}(4), 815--853 (1991)

\bibitem[{Garey and Johnson(1979)}]{GaJo79a}
Garey, M.R., Johnson, D.S.: Computers and Intractability: A Guide to the Theory
  of NP-Completeness. W. H. Freeman (1979)

\bibitem[{Garg and Murhekar(2023)}]{GaMu23a}
Garg, J., Murhekar, A.: Computing fair and efficient allocations with few
  utility values. Theoretical Computer Science \textbf{962}, 113932 (2023)

\bibitem[{Gokhale et~al.(2025)Gokhale, Sagar, Vaish, and Yadav}]{GSVY25a}
Gokhale, S., Sagar, H., Vaish, R., Yadav, J.: Approximating one-sided and
  two-sided nash social welfare with capacities. In: Proceedings of the 24th
  International Conference on Autonomous Agents and Multiagent Systems (AAMAS),
  pp. 914--922 (2025)

\bibitem[{Harsanyi(1959)}]{Hars59a}
Harsanyi, J.C.: A bargaining model for the cooperative n-person game. In:
  Tucker, A.W., Luce, R.D. (eds.) Contributions to the Theory of Games,
  vol.~IV, pp. 325--355, Princeton University Press (1959)

\bibitem[{Hell and Kirkpatrick(1984)}]{HeKi84a}
Hell, P., Kirkpatrick, D.G.: Packings by cliques and by finite families of
  graphs. Discrete Mathematics \textbf{49}(1), 45--59 (1984)

\bibitem[{Jain and Vaish(2024)}]{JaVa24a}
Jain, P., Vaish, R.: Maximizing {N}ash social welfare under two-sided
  preferences. In: Proceedings of the 38th AAAI Conference on Artificial
  Intelligence (AAAI), pp. 9798--9806 (2024)

\bibitem[{Kaneko and Nakamura(1979)}]{KaNa79a}
Kaneko, M., Nakamura, K.: The nash social welfare function. Econometrica
  \textbf{47}(2), 423--435 (1979)

\bibitem[{Lee(2017)}]{Lee17a}
Lee, E.: {APX}-hardness of maximizing nash social welfare with indivisible
  items. Information Processing Letters \textbf{122}, 17--20 (2017)

\bibitem[{Levinger et~al.(2024)Levinger, Hazon, Simola, and Azaria}]{LHSA24a}
Levinger, C., Hazon, N., Simola, S., Azaria, A.: Coalition formation with
  bounded coalition size. In: Proceedings of the 23rd International Conference
  on Autonomous Agents and Multiagent Systems (AAMAS), pp. 1119--1127 (2024)

\bibitem[{Nash(1950)}]{Nash50b}
Nash, J.F.: The bargaining problem. Econometrica \textbf{18}(2), 155--162
  (1950)

\bibitem[{von Neumann and Morgenstern(1953)}]{vNM53a}
von Neumann, J., Morgenstern, O.: Theory of Games and Economic Behavior.
  Princeton University Press, 3rd edn. (1953)

\bibitem[{Olsen(2012)}]{Olse12a}
Olsen, M.: On defining and computing communities. In: Proceedings of the 18th
  Computing: The Australasian Theory Symposium (CATS), Conferences in Research
  and Practice in Information Technology (CRPIT), vol. 128, pp. 97--102 (2012)

\bibitem[{Peters and Elkind(2015)}]{PeEl15a}
Peters, D., Elkind, E.: Simple causes of complexity in hedonic games. In:
  Proceedings of the 25th International Joint Conference on Artificial
  Intelligence (IJCAI), pp. 617--623 (2015)

\bibitem[{Ray(2007)}]{Ray07a}
Ray, D.: A Game-Theoretic Perspective on Coalition Formation. Oxford University
  Press (2007)

\bibitem[{Suksompong(2023)}]{Suks23a}
Suksompong, W.: A characterization of maximum {N}ash welfare for indivisible
  goods. Economics Letters \textbf{222}, 110956 (2023)

\bibitem[{Sung and Dimitrov(2007)}]{SuDi07b}
Sung, S.C., Dimitrov, D.: On myopic stability concepts for hedonic games.
  Theory and Decision \textbf{62}(1), 31--45 (2007)

\bibitem[{Sung and Dimitrov(2010)}]{SuDi10a}
Sung, S.C., Dimitrov, D.: Computational complexity in additive hedonic games.
  European Journal of Operational Research \textbf{203}(3), 635--639 (2010)

\bibitem[{Zech and Bullinger(2026)}]{ZeBu26a}
Zech, V., Bullinger, M.: Deviation dynamics in cardinal hedonic games. In:
  Proceedings of the 40th AAAI Conference on Artificial Intelligence (AAAI)
  (2026), forthcoming

\bibitem[{Zuckerman(2007)}]{Zuck07a}
Zuckerman, D.: Linear degree extractors and the inapproximability of max clique
  and chromatic number. Theory of Computing \textbf{3}, 103--128 (2007)

\end{thebibliography}

\clearpage

\appendix

\section*{Appendix}\label{sec:appendix}

In the appendix, we present the proofs omitted in the main body of the paper.

\subsection*{Proofs for Section \ref{sec:aeg}}

Here, we provide proofs related to AEGs.

\cliqueslemma*
\begin{proof}
Given the mutual-friendship graph $G_M$ of an AEG, let $\pi$ be an IR partition and $C=\pi(i)$ be the coalition of some player $i$.
Suppose for contradiction that $C$ contains two players $i$ and $j$ who are not mutual friends, w.l.o.g.\ $v_i(j)=-n$.
Then,
\begin{align*}
    u_\pi(i)&=\sum_{k\in C\setminus\{i\}} v_i(k)
    =v_i(j)+\sum_{k\in C\setminus\{i,j\}} v_i(k)\\
    &\le-n + (|C|-2)\cdot 1
    \le-n + (n-2) < 0\text,
\end{align*}
thus violating IR.

Conversely, suppose every coalition in $\pi$ induces a symmetric clique in $G_M$.
Fix any player $i$ and let $C=\pi(i)$.
\[
u_\pi(i)=\sum_{k\in S\setminus\{i\}} v_i(k)=|C|-1\ge 0
\]
So $\pi$ is IR.
\end{proof}

\hardness*
We reduce from the following problem:
\begin{problemdef}
    \problemtitle{\textsc{PartitionIntoTriangles}}
    \probleminput{An undirected graph $G=(V,E)$, with $|V|=3q=n$ for some integer $q$.}
    \problemsolution{`Yes' if the vertices of $G$ can be partitioned in $q$ disjoint sets $V_1,...,V_q$, each containing exactly $3$ vertices, such that each of these $V_i$ is the vertex set of a triangle in $G$. `No' otherwise.}
\end{problemdef}
\begin{proof}

It is known that \textsc{PartitionIntoTriangles} is \NP-complete even for tripartite graphs in which each of the independent three sets is specified \citep[Theorem 2.1]{Colb84a}.
Given a tripartite graph $G$ as an instance of \textsc{PartitionIntoTriangles}, we construct an AEG $(N,v)$ with $N=V$ and symmetric preferences of the players as follows: $v_i(j)=1$ if $(i,j) \in E$, $v_i(j)=-n$ otherwise.

We claim that a partition into triangles $\pi^*$ is the unique partition with the maximum Nash welfare if it exists. In such a partition, each agent has two friends in her coalition and hence gets a utility of $2$. The Nash welfare of $\pi^*$ is therefore $2$.

Now, assume there exists an IR partition $\mu$ other than a partition into triangles which achieves Nash welfare of at least $2$. If all coalitions in $\mu$ had size $3$, it would contain at least one coalition which is no clique in the mutual friendship graph and hence would violate IR by \Cref{lem:cliques}. Hence, this cannot be the case. Further, $\mu$ must contain at least one coalition $C$ in which some agent $i$ has a utility greater than $2$ because otherwise there would be agents with utility strictly less than $2$ while all utilities are also upper bounded by $2$, resulting in a Nash welfare less than $2$. In $C$, agent $i$ has at least $3$ friends. But since the graph from which we created the AEG instance is tripartite, $C$ cannot be a clique in the mutual friendship graph, again violating IR by \Cref{lem:cliques}.

We have shown that if $G$ admits a partition into triangles, the corresponding AEG partition achieves a Nash welfare of $2$ in the derived instance, and if $G$ does not admit a partition into triangles, there is no means of achieving a Nash welfare of at least $2$ in the derived AEG instance. Hence, we can decide \textsc{PartitionIntoTriangles} by computing the maximal Nash welfare in a symmmetric AEG, rendering the latter problem \NP-hard.
\end{proof}

\nonabandon*
\begin{proof}
First, consider the case where the deviation is from a singleton coalition. Here, the Nash welfare clearly increases (strictly if no agents with $0$ utility remain). Hence, assume from now on that the deviation is from a coalition of size at least $3$ (remember it has to be non-abandoning). Moreover, assume that the initial Nash welfare is positive (otherwise it may just stay $0$).
Let $\pi=\{C_i\mid 1\le i\le k\}$ denote the starting partition (into cliques $C_i$) when the Nash welfare before a non-abandoning individual deviation is
\[
\NW(\pi) = \sqrt[n]{(|C_1|-1)^{|C_1|} \cdot ... \cdot (|C_i|-1)^{|C_i|} \cdot ... \cdot (|C_k|-1)^{|C_k|}}\text.
\]
Suppose an agent is deviating from coalition \( C_i \) to coalition \(C_j \). Clearly, \( |C_i| \leq |C_j| \) before the deviation, because both coalitions are cliques and the deviation is only profitable for $i$ if the target coalition is larger.
The Nash welfare of the new partition $\pi'$ is
\begin{align*}
    &\NW(\pi') =\\
    &\sqrt[n]{(|C_1|-1)^{|C_1|} \cdot ... \cdot (|C_i|-2)^{|C_i|-1} \cdot ... \cdot (|C_j|)^{|C_j|+1} \cdot ... \cdot (|C_k|-1)^{|C_k|}}.
\end{align*}
We want to show \( NW(\pi') > NW(\pi) \), which is equivalent to showing that
\[
(|C_i|-1)^{|C_i|} \cdot (|C_j|-1)^{|C_j|} < (|C_i|-2)^{|C_i|-1} \cdot (|C_j|)^{|C_j|+1}\text.
\]
since all other factors remain unchanged.
Let \( p = |C_i| \) and \( q = |C_j| \), with \( p > 2 \) and \( p \leq q \). The inequality becomes
\[
(p-1)^p \cdot (q-1)^q < (p-2)^{p-1} \cdot q^{q+1}\text,
\]
which is equivalent to
\[
\frac{(p-1)^p}{(p-2)^{p-1}} < \frac{q^{q+1}}{(q-1)^q}.
\]
We define two functions $f(p)=\frac{(p-1)^p}{(p-2)^{p-1}}$ and $g(q)=\frac{q^{q+1}}{(q-1)^q}$. The goal is to show that $f(p)<g(q)$ for all $2 < p \leq q$. Observe that $g(q)=f(q+1)=f(p+k+1)$ for $k \geq 0$.
If we show that $f$ is monotonically increasing, then it follows that $f(p)< f(p+k+1)=g(q)$ and hence \( NW(\pi') > NW(\pi) \).
To prove that \( f \) is increasing for \( x > 2 \), consider its logarithm:
\[
h(x) = \log f(x) = x \log(x - 1) - (x - 1) \log(x - 2)
\]
Differentiating,

\[
h'(x) = \frac{d}{dx} \left[ x \log(x - 1) \right] - \frac{d}{dx} \left[ (x - 1) \log(x - 2) \right]=
\]

\[
 = \left(1 \cdot \log(x - 1) + x \cdot \frac{1}{x - 1} \right) - \left(1 \cdot \log(x - 2) + (x - 1) \cdot \frac{1}{x - 2} \right)=
\]

\[
 = \log(x - 1) + \frac{x}{x - 1} - \log(x - 2) - \frac{x - 1}{x - 2}=
\]

\[
 = \log\left(\frac{x - 1}{x - 2}\right) + \left( \frac{x}{x - 1} - \frac{x - 1}{x - 2} \right)=
\]

\[
 = \log\left(1 + \frac{1}{x - 2}\right) - \frac{1}{(x - 1)(x - 2)}\text.
\]
Since \( \log(1 + y) > \frac{y}{1 + y} \) for \( y > 0 \), and \( x > 2 \), we have
\[
\log\left(1 + \frac{1}{x - 2}\right) > \frac{1}{x - 1}, \quad \text{so}
\]
\[
h'(x) > \frac{1}{x - 1} - \frac{1}{(x - 1)(x - 2)} \ge 0\text.
\]
Hence, $h$ is increasing for $x>2$ and so is $f$. Therefore, $\NW(\pi')>\NW(\pi)$.
\end{proof}

\aegzero*
\begin{proof}
If the optimal Nash welfare is strictly positive, then there exists a partition into cliques without singletons. Every clique $K_m$ of the partition can be covered with $K_2$'s and $K_3$'s: if $m$ is even, take only $K_2$'s; if $m$ is odd, take one $K_3$ and partition the remaining players into $K_2$'s. So the graph admits a factor with $\{K_2,K_3\}$.
If the maximum size $\{K_2,K_3\}$-packing of $G_M$ is a $\{K_2,K_3\}$-factor, then every player gets utility of at least 1, meaning the Nash welfare is a product of strictly positive terms, i.e., it is strictly positive.
\end{proof}

\aegapprox*
\begin{proof}
Let $G_M$ be the mutual-friendship graph of the AEG $G$. We run the polynomial-time algorithm that computes a maximum-cardinality $\{K_2,K_3\}$-packing of $G_M$. If this packing is not a factor, we know by \Cref{lem:aeg_zero} that the optimal Nash welfare is $0$. In this case, any IR partition will do, e.g., the all-singletons-partition. Hence, assume now that the $\{K_2,K_3\}$-packing returned by the algorithm is a factor. Let $\pi$ be the partition induced by this factor, i.e., each copy of $K_2$ or $K_3$ forms a coalition in $\pi$.

The Nash welfare obtained by $\pi$ is at least $1$ because every agent has at least one friend in his coalition, rendering the geometric mean of agent utilities to have this lower bound. The Nash welfare obtained by the optimal partition $\pi^*$ is at most $n-1$, representing the case when $G_M$ is the complete graph and $\pi^*$ consists of the grand coalition, because then each agent has the maximum number of friends, i.e., $n-1$ in her coalition. We obtain the following.
\begin{equation*}
    \frac{\NW(\pi^*)}{\NW(\pi)}\le\frac{n-1}{1}=n-1
\end{equation*}
\end{proof}

\subsection*{Proofs for Section \ref{sec:afg}}

Here, we provide proofs related to AFGs.

\afgzero*
\begin{proof}
Suppose that there exists an agent $i^*$ who considers no other agent a friend.
In any coalition $S$ containing $i^*$ and some other agent, the utility of
$i^*$ is strictly negative, since every other member of $S$ contributes $-1$.
Therefore, in any IR partition $\pi$, agent $i^*$ must form a singleton,
and hence $u_{i^*}(\pi)=0$.
It follows that $\NW(\pi)=0$ for every IR partition $\pi$, and therefore
$\NW^*=0$.

Conversely, assume that every agent has at least one friend.
Consider the partition $\pi$ with only one coalition including all agents in it.
We have
\[
u_i(\pi) = n\cdot |F_i| - (n-1-|F_i|)= (n+1)\cdot |F_i| - (n-1) \;\ge\; (n+1)\cdot 1 - (n-1) = 2\]
for any $i\in N$, where $F_i$ is the set of friends of agent $i$.

We have $u_i(\pi)>0$ for all $i$, so $\pi$ is individually
rational and has strictly positive Nash welfare $\NW(\pi)>0$.
Therefore $\NW(\pi^*) \ge \NW(\pi)>0$.
\end{proof}

\afgapprox*
\begin{proof}
    First, we prove the claimed approximation ratio. The key is to show that each agent with a utility smaller than $\frac{n}{2}$ in the partition $\pi$ returned by the algorithm also has to have a utility smaller than $\frac{n}{2}$ in the optimal partition $\pi^*$.
    To this end, suppose $u_i({\pi})<\frac{n}{2}$ for some $i \in N$. We see that
    \begin{align*}
        &u_i(\pi)=n\cdot|F_i(\pi(i))|-|E_i(\pi(i))|<\frac{n}{2}
        \iff&|E_i(\pi(i))|>n\cdot\big(|F_i(\pi(i))|-\frac{1}{2}\big)\text,
    \end{align*}
    where $F_i(\pi(i))$ and $E_i(\pi(i))$ denote, respectively, the set of $i$'s friends and the set of $i$'s enemies within $i$'s coalition. 
The inequality cannot hold for $|F_i(\pi(i))|\geq 2$ as in such cases the number of enemies of agent $i$ in the coalition would need to be greater than $\frac{3}{2}n$.
    If $F_i(\pi(i))=0$, consider the cases $i\in\hat{U}$ and $i\notin\hat{U}$ which we show lead both to contradictions. First, assume $i\notin\hat{U}$. Then, $i$ had positive utility after phase~1. But it is impossible that this has changed because of the check in line~12. Second, assume $i\in\hat{U}$. Then, $i$ would have deviated in phase~2, and the same argument as in the first case holds.
    Hence, the only possible configuration is with $F_i(\pi(i))=1$, namely that $i$ has exactly one friend $f_i$ in $\pi(i)$ and at least $\frac{n}{2}+1$ enemies.

    Observe that there can only exist one coalition with such agents, otherwise we would exceed the total number of agents. We call this coalition $C$. So every agent with utility smaller than $\frac{n}{2}$ is in $C$. Suppose now there exists an agent $i$ in $C$ with $u_i(\pi)<\frac n2$ who has another friend in the game besides $f_i$. This other friend $o_i$ is in a different coalition $C'$ in $\pi$. Since $C$ contains more than half of all agents, it has to be a star on $G_M$ (see below), and $i\in U$. Hence, $i$ would have deviated to $C'$ in phase~2 of the algorithm before it terminates, a contradiction. Thus, each agent with utility less than $\frac n2$ in $\pi$ has only one friend in the entire game.

    Moreover, observe (as an invariant of \Cref{alg:matching-congestion}) that every coalition of $\pi$ is a connected component in $G_M$. More specifically, it is either a triangle or a star. The first case of triangles are precisely the triangles present after phase~1. The second case of stars comes from observing that all agents in $U\cap\hat{C}$ are pairwise enemies for any coalition $\hat{C}\in\pi$, and each coalition of $\pi$ which contains at least one agent in $U$ contains exactly one agent not in $U$.

    To show the claim that all agents in $U\cap\hat{C}$ are indeed pairwise enemies, first observe that this clearly holds for all initially unassigned agents ($\hat{U}$) as otherwise we directly have a contradiction to the maximality of the initial packing. Further, in line 9, only agents get added who are enemies with all agents in $\hat{U}$. Hence, the only possible option would be that two agents $i$ and $j$ in $U\setminus\hat{U}$ are friends. Assume now that such a pair of agents exists. Call the respective partners of $i$ and $j$ in the initial packing $p_i$ and $p_j$. Both have at least one friend in $\hat{U}$ by the check in line~8. If those friends are distinct, this again contradicts the maximality of the initial packing as there would be an augmenting path with them. It remains to consider the case that $p_i$ and $p_j$ both only have one friend in $\hat{U}$ and it is the same agent. Let that agent be called $p_{ij}$. Assume further, w.l.o.g., that at some point, $i$ joined $j$ in $\hat{C}$. There must have been a reason for this deviation. The possibilities are: First, some enemy of $i$ (who has to be a friend of $p_i$ then) joined the pair $i$, $p_i$. Or, second, some friend of $i$ joined the pair $j$, $p_j$ (and hence has to be a friend of $j$ or $p_j$). We can treat both cases the same way because the important point is merely that some agent, let us denote her by $l_1$ is friends with a member of one of the pairs. Moreover, $l_1$ cannot be in $\hat{U}$ because we ensured $p_{ij}$ is the only such friend of $p_i$ and $p_j$ while $j$ has no such friend at all. The high-level argument is now that the constellation $i$, $p_i$, $p_{ij}$, $p_j$, $j$ forms a blossom, and we can find an augmenting path along $l_1$. This goes as follows. $l_1\notin\hat{U}$ means $l_1$ is part of the initial packing. If $l_1$ is part of a triangle, we directly have an augmenting path. So $l_1$ has just one partner $k_1$. Since $l_1$ eventually left $k_1$ to join $j$ and $p_j$, there had to be another agent $l_2$ who joined $k_1$ beforehand. Otherwise the deviation by $l_1$ would have abandoned $k_1$. Now the arguments start to repeat. If $l_2\in\hat{U}$ or $l_2$ is part of a triangle in the initial packing, we have an augmenting path. So $l_2$ has just one partner in the initial packing, let us call him $k_2$. By repeating the arguments above, we get pairs of agents $l_{x+1}$ and $k_{x+1}$ from $l_x$ and $k_x$ until eventually some $l_x$ is in $\hat{U}$ or in a triangle of the initial packing. This has to happen since there are only finitely many agents. Finally, we have found an augmenting path, leading to a contradiction.

    Recall now the coalition $C$ introduced above. Since $C$ has to be a star coalition as just proven (it clearly cannot be a triangle), let us call the common friend of all members $f$. All agents in $C$ receive the same utility and hence by the above reasoning have $f$ as their only friend in the game. Thus, in the optimal partition $\pi^*$, all members of $C$ still have to be in a coalition with $f$ and hence get at most the utility they get in $\pi$.

\begin{equation*}
        u_i({\pi})<\frac{n}{2} \implies u_i({\pi^*})<\frac{n}{2}
    \end{equation*}

    We now analyze the approximation ratio: 
    \[
    \begin{split}
    \frac{\NW(\pi^*)}{\NW(\pi)}
    &=\frac{\sqrt[n]{\prod_{i\in N}u_i(\pi^*)}}{\sqrt[n]{\prod_{i\in N}u_i(\pi)}}
    =\sqrt[n]{\prod_{i\in N}\frac{u_i(\pi^*)}{u_i(\pi)}} \\
    &= \sqrt[n]{\prod_{i:\,u_i(\pi)\ge \frac{n}{2}}\frac{u_i(\pi^*)}{u_i(\pi)}}
    \cdot
    \sqrt[n]{\prod_{i:\,u_i(\pi)< \frac{n}{2}}\frac{u_i(\pi^*)}{u_i(\pi)}}\\
    &< \sqrt[n]{\prod_{i:\,u_i(\pi)\ge \frac{n}{2}}\frac{n^2}{\left(\frac{n}{2}\right)}}
    \cdot
    \sqrt[n]{\prod_{i:\,u_i(\pi)< \frac{n}{2}}\frac{\left(\frac{n}{2}\right)}{2}}
    \end{split}
    \]
    since $u_i(\pi^*)\leq n(n-1)<n^2$ in the full graph, $u_i(\pi)\geq2$ (every agent is in a coalition with at least one friend and at most $n-2$ enemies) and we proved that $u_i(\pi)<\frac{n}{2} \implies u_i(\pi^*)<\frac{n}{2}$. Let now $k$ denote the number of agents $i$ such that $u_i(\pi)\geq\frac{n}{2}$. Then,
    \begin{equation*}
        \frac{\NW(\pi^*)}{\NW(\pi)}<\sqrt[n]{(2n)^k\cdot\left(\frac{n}{4}\right)^{n-k}}=\sqrt[n]{8^k\cdot\left(\frac{n}{4}\right)^n}=\frac{n}{4}\sqrt[n]{8^k}<2n
    \end{equation*}
    where the last inequality holds since $k<n$.

    We proceed to prove the claimed polynomial running time. To this end, first note that phase~1 of the algorithm clearly takes polynomial time only, including finding a maximal $\{K_2,K_3\}$-packing \citep{HeKi84a}. For phase~2, we use utilitarian welfare as a potential function. Every (non-abandoning) Nash deviation increases the utilitarian welfare since the game is symmetric \citep{BoJa02a}. Furthermore, this increase has to be at least by $1$ because all valuations (and thus all utilities) are integers. Before phase~2, utilitarian welfare is non-negative, and the possible utilitarian welfare in the game is upper-bounded by $n^3$. As a result, there can occur at most $n^3$ deviations in phase~2, and the entire algorithm runs in polynomial time.

\end{proof}

\subsection*{Proofs for Section \ref{sec:hardness}}

Here, we provide the proof for hardness of approximation.

\approxhardness*
\begin{proof}
We reduce from \textsc{Uncapacitated Two-Sided Nash Welfare} (\textsc{UTSNW}), which is NP-hard to approximate
within a factor of $1.0000759$ even under additive valuations in $\{0,1,2\}$ \citep{GaMu23a,GSVY25a}.
Given an instance with workers $W$, firms $F$, and nonnegative (possibly asymmetric) utilities
$U_{\mathit{wf}},U_{\mathit{fw}}\ge 0$, we construct an ASHG on the same agent set $N:=W\cup F$ with additive valuations
\[
v_{\mathit{wf}}:=U_{\mathit{wf}},\qquad v_{fw}:=U_{\mathit{fw}}\qquad (w\in W,f\in F),
\]
\[
v_{\mathit{ww'}}:=0\qquad (w\neq w'\in W),
\qquad
v_{ff'}:=-H\qquad (f\neq f'\in F),
\]
where
\[
H > \max_{f\in F}\;\sum_{x\in W} v_{fx}
\]
For a partition $\pi$ of $N$, each agent's utility is
$u_i(\pi)=\sum_{j\in \pi(i)\setminus\{i\}} v_{\mathit{ij}}$, and we maximize the geometric mean
$\NW(\pi)=\sqrt[n]{\prod_{i\in N}u_i(\pi)}$ over IR partitions.

First, any IR partition contains at most one firm per coalition: if $f\neq f'$ are in the same coalition, then
\[
u_f(\pi)\le -H + \sum_{x\in W} v_{\mathit{fx}} < 0\text,
\]
contradicting IR.
Now consider any \textsc{UTSNW} allocation $\mu:W\to F$. Define the partition $\pi_\mu$ by creating, for each firm
$f$, the coalition $\{f\}\cup\{w\in W:\mu(w)=f\}$. Since $v_{\mathit{ww'}}=0$, we have for all $w\in W$ and $f\in F$:
$u_w(\pi_\mu)=v_{w,\mu(w)}=U_{w,\mu(w)}
\qquad\text{and}\qquad
u_f(\pi_\mu)=\sum_{w:\mu(w)=f} v_{\mathit{fw}}=\sum_{w:\mu(w)=f} U_{\mathit{fw}}$,
hence $\NW(\pi_\mu)=\NW(\mu)$.

Conversely, let $\pi$ be any IR partition. By the previous paragraph, each coalition contains at most one firm.
Define $\mu_\pi:W\to F$ by assigning each worker $w$ to the (unique) firm $f$ in her coalition. Then, again by
construction,
\[
u_w(\pi)=U_{w,\mu_\pi(w)}
\qquad\text{and}\qquad
u_f(\pi)=\sum_{w:\mu_\pi(w)=f} U_{fw},
\]
so $\NW(\mu_\pi)=\NW(\pi)$.
Therefore the reduction preserves the objective value exactly and is computable in polynomial time, so any polynomial-time $\alpha$-approximation for ASHG Nash welfare would yield a polynomial-time $\alpha$-approximation
for \textsc{UTSNW}. Taking $\alpha<1.0000759$ contradicts \citep[Theorem~6.1]{GSVY25a} unless
$\P=\NP$. Hence it is \NP-hard to approximate Nash welfare for ASHGs within a factor smaller than
$1.0000759$.
\end{proof}

\subsection*{Proofs for Section \ref{sec:restrict}}

Here, we provide proofs related to AEGs with bounded number or size of coalitions.

\twocoalitions*
\begin{proof}
Let $G_M$ be the mutual-friendship graph of the AEG on $n$ agents and let $H=\overline{G_M}$ be its complement.
Since we are allowed to use at most two coalitions, an IR partition exists if and only if $G_M$ is the union of at most
two cliques, i.e., $H$ is bipartite. A bipartition can be found in polynomial time by BFS or DFS.
Fix any bipartition $(X,Y)$ of $H$ and write $s:=\min\{|X|,|Y|\}\le \frac n2$. In the corresponding partition of agents into two coalitions
of sizes $s$ and $n-s$, every agent in a coalition of size $t$ has utility $t-1$, hence the Nash welfare of a partition is determined by $s$. We would therefore just write $\NW(s)$. Further, the Nash welfare strictly increases with the function $F(s):=\sqrt[n]{\NW(s)}$. We have
\[
F(s)=(s-1)^s\cdot(n-s-1)^{n-s}.
\]
If $s=0$ then the grand coalition is trivially optimal. If $s=1$ then $F(s)=0$ since some agent has utility $0$. Thus the nontrivial case is $2\le s\le n-2$.
Now observe that different connected components of a bipartite graph can be flipped independently.
Let the connected components of $H$ be $C_1,\dots,C_c$. For each $C_i$, compute a bipartition
$(A_i,B_i)$ by BFS/DFS and assume w.l.o.g.\ $|A_i|\le |B_i|$. Any global bipartition of $H$ is obtained by choosing,
for each component, whether $(A_i,B_i)$ is kept or swapped, so the set of possible values of $s=\min\{|X|,|Y|\}$
depends only on these flips. We claim that $F(s)$ is strictly decreasing for $s\in[2,\frac n2]$ increasing, hence the Nash welfare
is maximized by minimizing $s$.

To see this, let $g(s):=\ln(F(s))= s\ln(s-1) + (n-s)\ln(n-s-1)$ for $2\le s\le \frac n2$. Then,
\[
g'(s)= \ln(s-1)+\frac{s}{s-1} - \ln(n-s-1) - \frac{n-s}{n-s-1},
\]
and
\[
g''(s)= \frac{s-2}{(s-1)^2} + \frac{n-s-2}{(n-s-1)^2}.
\]
For $2\le s\le \frac n2$ we have $g''(s)>0$, so $g'$ is strictly increasing. Moreover $g'(\frac n2)=0$ by symmetry, and so $g'(s)<0$ for all $s<\frac n2$, implying that $g(s)$, therefore $F(s)$, and therefore $\NW(s)$ are strictly decreasing in this region.

Therefore, among all bipartitions, an optimal one is obtained by making the smaller side as small as possible, very much like the consequences of \Cref{prop:non-abandon}.
This is achieved by taking
\[
X := \bigcup_{i=1}^c A_i \qquad\text{and}\qquad Y := \bigcup_{i=1}^c B_i,
\]
when for each component the smaller side $A_i$ is assigned to the same coalition. The resulting partition
into at most two coalitions maximizes Nash welfare (in the degenerate cases of $s=1$ it yields the maximum of $\NW(\pi^*)=0$).
All steps (bipartite test, component decomposition, BFS colorings, and computing sizes) can be performed in polynomial time
(in fact $\mathcal O(n+m)$, where $m$ is the number of edges in the graph), so the maximal Nash welfare for AEGs with at most two allowed coalitions can be efficiently computed.
\end{proof}

\hardnessnumcoal*
\begin{proof}
We reduce from $k$-\textsc{Coloring}, NP-complete problem for every fixed $k\ge 3$, as shown by \citet{GaJo79a}. Given an instance $H$, let $G=\overline{H}$ and obtain $G'$ by adding $k$ new vertices $u_1,\dots,u_k$ such that:
(i) each $u_j$ is adjacent to every vertex of $G$; and
(ii) the $u_j$ are pairwise non-adjacent.
The AEG instance is given by $G'$, considering it a friendship graph.

\smallskip
\noindent(\(\Rightarrow\)) If $H$ is $k$-colorable with color classes $C_1,\dots,C_k$, then each $C_j$ is a clique in $G$. Put $u_j$ together with $C_j$. This yields exactly $k$ cliques in $G'$, each of size at least $2$, so $\NW>0$.

\smallskip
\noindent(\(\Leftarrow\)) Suppose there is a partition $\pi$ of $V(G')$ into at most $k$ cliques with $\NW>0$. No singleton is allowed. Since the $u_j$ are pairwise non-adjacent, no clique contains two distinct $u_j$. Each $u_j$ must share a clique with at least one original vertex of $G$ to avoid being a singleton. Hence $\pi$ has at least $k$ cliques and, under the constraint “at most $k$,” it has exactly $k$ cliques, each containing exactly one $u_j$ plus a subset of $V(G)$ that is a clique in $G$. Removing the $u_j$ yields a partition of $V(G)$ into $k$ cliques, i.e., a partition of $V(H)$ into $k$ independent sets, so $H$ is $k$-colorable.

The reduction is polynomial and preserves YES/NO answers. Therefore the decision problem is \NP-complete for every fixed $k\ge 3$.
\end{proof}

\algosizecoal*
\begin{proof}
    Feasible partitions are perfect matchings of the mutual-friendship graph $G_M$ representing the game.
    Thus, if $n$ is odd, we immediately know the optimal Nash welfare is $0$.
    Assume $n$ is even in the following.
    \begin{itemize}
        \item In the binary case (friendship edges of weight $1$), any perfect matching yields $\NW=1$. Deciding existence and finding one reduces to \emph{maximum cardinality matching}.
        \item With asymmetric valuations, restricting to $s=2$ yields an objective equal to the product over edges $\{i,j\}$ in the matching of weight $v_{ij}v_{ji}$. Maximizing the product is equivalent to maximizing the sum of $\log(v_{ij}v_{ji})$, hence to a \emph{maximum weight matching} with edge weights $w_{ij}=\log(v_{ij}v_{ji})$.

    \end{itemize}
    Both maximum cardinality and maximum weight matching are solvable in polynomial time (e.g., Edmonds’ blossom algorithm \citep{Edmo65a} and its weighted extensions \citep{GaTa91a}).
\end{proof}

\hardnesscoalsize*
\begin{proof}
We first show hardness for $s=3$ and then for $s>3$ with an analogous proof.
Reduce from \textsc{PartitionIntoTriangles}: given a graph $G$, decide whether $V(G)$ can be partitioned into vertex-disjoint triangles.

Let $\mathcal{A}$ be as in the statement. Since $\mathcal{A}$ is finite, $|\mathcal{A}|\ge 2$, and $\max \mathcal{A}>0$, we can choose:
a positive valuation $\alpha \in \mathcal{A}$ such that $\alpha = \max(\mathcal{A})$ and a valuation $\beta < \alpha$.
Build a symmetric ASHG on the agent set $N := V(G)$ by setting, for every distinct $i,j \in N$,
\[
v_{ij} = v_{ji} :=
\begin{cases}
\alpha & \text{if } \{i,j\} \in E(G),\\
\beta  & \text{otherwise.}
\end{cases}
\]
Thus the underlying mutual-friendship graph coincides with $G$, and edges of $G$ correspond to symmetric pairs of valuation~$\alpha$.

If $G$ admits a triangle factor, then there exists a partition of $V$ into $K_3$’s, every agent has utility $2\alpha$, and $\NW(\pi^*)=2\alpha$. Conversely, if no triangle factor exists, then in every feasible partition some player in a coalition has at most 1 friend, getting utility at most $\alpha$, in particular, $\NW(\pi^*)\le 2\alpha$, with equality only when all coalitions are triangles. Therefore deciding whether $\NW(\pi^*)=2\alpha$ solves \textsc{PartitionIntoTriangles}. Since \textsc{PartitionIntoTriangles} is NP-complete, maximizing $\NW$ is NP-hard for $s=3$.

For any fixed $s>3$, reduce from \textsc{PartitionInto$K_s$} (NP-complete problem as shown by \citet{HeKi84a}) 
using the same construction. If $G$ has a $K_s$-factor then $\NW(\pi^*)=(s-1)\alpha$; otherwise the optimum is strictly smaller. Hence the optimization problem is NP-hard for all $s\ge 3$.
\end{proof}

\aegapproxcoalsize*
\begin{proof}
If the packing fails to cover all vertices, $\NW(\pi^*)=0$, so the approximation ratio is $1$.
Otherwise we output a $\{K_2,K_3\}$-factor, let us call $\pi$ the corresponding partition. Every agent in a $K_2$ coalition has utility of $1$; every agent in a $K_3$ coalition has utility of $2$, hence $\NW(\pi)\geq 1$. On the other hand, under the size bound $s$ no agent can exceed utility $s-1$, so $\NW(\pi^*)\le s-1$. Therefore
\[
\frac{\NW(\pi^*)}{\NW(\pi)}\ \le\ \frac{s-1}{1}\ =\ s-1.
\]
For $s=3$ the bound is tight: there are graphs admitting both a perfect matching and a triangle factor; a $\{K_2,K_3\}$-factor returned by the packing routine may be a perfect matching, giving $\NW(\pi)=1$ while the optimal triangle factor gives $\NW(\pi^*)=2$.
\end{proof}

\subsection*{Proofs for Section \ref{sec:stability}}

Here, we provide the proof for implied CNS in symmetric ASHGs.

\nashcns*
\begin{proof}
    Note that if $\NW(\pi^*)>0$, then any partition $\pi$ that achieves a finite approximation factor with respect to $\pi^*$ must also satisfy $\NW(\pi)>0$. Equivalently, we must have $u_i(\pi)>0$ for all $i\in N$, since otherwise $\NW(\pi)=0$ and the approximation ratio would be unbounded.

    Suppose by contradiction that $\pi$ is not CNS. Then, there exist an agent $i$ and a different partition $\pi'$ that can be reached via the contractual Nash deviation of agent $i$. Hence, it holds that $u_j(\pi')\geq u_j(\pi)$ for all $j\in \pi(i)$. Fix any $j\in \pi(i)\setminus \{i\}=\pi'(j)$. Since only agent $i$ is deviating, this is the only change in $j$'s coalition. Thus, 
    $u_j(\pi(i)) = \sum_{k \in \pi(i) \setminus \{j\}} v_{jk}
    \quad\text{and}\quad
    u_j(\pi'(j)) = \sum_{k \in \pi'(j) \setminus \{j\}} v_{jk}
    = u_j(\pi(i)) - v_{ji}.$
    
    The condition $u_j(\pi')\geq u_j(\pi)$ then becomes
\[
u_j(\pi(i)) - v_{ji} \;\ge\; u_j(\pi(i))
\quad\Longrightarrow\quad
-v_{ji} \;\ge\; 0
\quad\Longrightarrow\quad
v_{ji} \;\le\; 0.
\]

By symmetry, $v_{ji} = v_{ij}$, so we obtain $v_{ij} \le 0$ for every $j \in \pi(i) \setminus \{i\}$.
Hence
\[
u_i(\pi(i)) \;=\; \sum_{j \in \pi(i) \setminus \{i\}} v_{ij} \;\le\; 0.
\]
On the other hand $\pi$ is IR, so $u_i(\pi(i)) \ge 0$.
Combining these inequalities yields
\[
u_i(\pi(i)) = 0.
\]
This contradicts the assumption that $u_i(\pi(i)) > 0$ for all $i \in N$.
Therefore no such deviation from $\pi$ can exist, and $\pi$ must be CNS.
\end{proof}

\end{document}